\newtheorem{theorem}{Theorem}
\newtheorem{lemma}[theorem]{Lemma}
\newtheorem{corollary}{Corollary}
\theoremstyle{definition}
\newtheorem{definition}{Definition}
\theoremstyle{remark}
\newtheorem{remark}{Remark}
\theoremstyle{remark}
\newtheorem{assumption}{Assumption}
\DeclareMathOperator{\sinc}{sinc}
\DeclareMathOperator{\vol}{V}
\DeclareMathOperator{\diag}{diag}
\providecommand{\norm}[1]{\left\lVert#1\right\rVert}
\providecommand{\abs}[1]{\left|#1\right|}
\newcommand{\nn}{\nonumber}
\newcommand{\vect}[1]{\underaccent{\bar}{#1}}
\newcommand{\const}[1]{{\mathcal{#1}}}
\newcommand{\Reals}{\mathbb{R}}
\newcommand{\Complex}{\mathbb{C}}
\newcommand{\E}{\mathsf{E}}
\newcommand{\ie}{\emph{i.e.}}
\newcommand{\eg}{\emph{e.g.}}
\newcommand{\der}{\mathrm{d}}
\newcommand{\iid}{\text{i.i.d.}}
\newcommand{\eqdef}{\stackrel{\Delta}{=}}
\newcommand{\snr}{\text{SNR}}
\newcommand{\matd}[1]{\mathsf{#1}}
\newcommand{\matr}[1]{\mathbb{#1}}
\newcommand{\normalc}[2]{\mathcal{N}_{\Complex}\!\left(#1,#2\right)}
\newcommand{\pr}{\textnormal{Pr}}
\newcommand{\D}{\mathcal{D}}
\newcommand{\dof}{DOF}
\newcommand{\dofs}{DOFs}
\begin{document}

\thispagestyle{empty}

\title{The Asymptotic Capacity of the Optical Fiber%
\thanks{The author is with the Communications and Electronics
Department, T\'el\'ecom ParisTech, Paris, France.
Email: \texttt{yousefi@telecom-paristech.fr}.}
}

\author{Mansoor~I.~Yousefi}

\date{}

\maketitle

\IEEEpeerreviewmaketitle

\begin{abstract}
It is shown that signal energy is the only available degree-of-freedom (\dof)
for fiber-optic transmission as 
the input power tends to infinity. With $n$ signal \dofs\
at the input, $n-1$ \dofs\ are asymptotically lost to signal-noise interactions.
The main observation is that, nonlinearity 
introduces a multiplicative noise in the channel, similar to fading in wireless 
channels. The channel is viewed in the spherical coordinate system, where signal 
vector $\vect X\in\Complex^n$ is represented in terms of its norm $\abs{\vect X}$ and direction 
$\hat{\vect X}$. The multiplicative noise causes signal direction 
$\hat{\vect X}$ to vary randomly on the surface of the unit $(2n-1)$-sphere in $\Complex^{n}$, in such a way that
the effective area of the support of $\hat {\vect X}$ does not vanish as $\abs{\vect X}\rightarrow\infty$. On the other hand, 
the surface area of the sphere is finite, so that
$\hat{\vect X}$ carries finite information. This observation is 
used to show several results. Firstly, let $\const C(\const P)$ be the
capacity of a discrete-time periodic model of the optical fiber with
distributed noise and frequency-dependent loss, 
as a function of the average input power 
$\const P$. 
It is shown that asymptotically as $\const P\rightarrow\infty$,  $\const
C=\frac{1}{n}\log\bigl(\log\const P\bigr)+c$, where $n$ is the dimension of
the input signal space and $c$ is a bounded
number. 
In particular, 
$\lim_{\const P\rightarrow\infty}\const C(\const P)=\infty$ in
finite-dimensional periodic models.  
Secondly, it is shown that capacity 
saturates to a constant in infinite-dimensional models where $n=\infty$. An expression is provided for 
the constant $c$, by showing that, as the input $\abs{\vect
  X}\rightarrow\infty$, the action of the discrete periodic stochastic nonlinear
Schr\"odinger equation tends to multiplication by a random
matrix (with fixed distribution, independent of input). Thus,  perhaps
counter-intuitively, noise simplifies the nonlinear channel at high
powers to a 
\emph{linear} multiple-input multiple-output fading
channel. As $\const P\rightarrow\infty$ signal-noise interactions gradually reduce the
slope of the $\const C(\const P)$, to a point where increasing the
input power returns diminishing gains.
Nonlinear frequency-division 
multiplexing can be applied to approach capacity in optical
networks, where linear multiplexing achieves low rates at high powers.
 
\end{abstract}


\section{Introduction}
Several decades since the introduction of the optical fiber, channel
capacity at high powers remains a vexing conundrum.  Existing achievable rates saturate at high powers because 
of linear multiplexing and treating the resulting interference as
noise in network environments \cite{yousefi2012nft1,yousefi2012nft2,yousefi2012nft3}. Furthermore, 
it is difficult to estimate the
capacity via numerical 
simulations, because channel has memory. 

Multi-user communication problem for (an ideal model of) optical fiber can be reduced to single-user 
problem using the nonlinear frequency-division multiplexing (NFDM) \cite{yousefi2012nft1,yousefi2012nft3}. This 
addresses deterministic distortions, such as inter-channel and inter-symbol 
interference (signal-signal interactions). The problem is then reduced to finding  
the capacity of the point-to-point optical fiber set by noise. 

There are two effects in fiber that impact Shannon capacity in point-to-point channels. (1) 
Phase noise. Nonlinearity transforms additive noise to phase noise in the channel. As the amplitude of the input signal 
tends to infinity, 
the phase of the output signal tends to a uniform random variable in
the zero-dispersion channel \cite[Section~IV]{yousefi2011opc}. 
As a result,  phase carries finite information in  the non-dispersive fiber.
(2) Multiplicative noise. Dispersion converts phase noise to amplitude noise, introducing an effect which at 
high powers is similar to fading in wireless channels. Importantly,
the conditional entropy grows strongly with input signal.

In this paper, we study the asymptotic capacity of a discrete-time
periodic model of the optical fiber as the 
input power tends to infinity. 
The role of the nonlinearity in point-to-point discrete channels pertains to signal-noise 
interactions, captured by the conditional entropy. 

The main result is the following theorem, describing capacity-cost function in models with constant and
non-constant loss; see Definition~\ref{def:loss}.

\begin{theorem} 
Consider the discrete-time periodic model of the NLS channel
\eqref{eq:nls} described in Section~\ref{sec:mssfm}, with non-zero dispersion. Capacity 
is asymptotically  
\begin{IEEEeqnarray*}{rCl}
\const C(\const P)=
\begin{cases}
\frac{1}{n}\log(\log\const P)+c, & \textnormal{  non-constant loss},\\
\frac{1}{2n}\log\const P+c, & \textnormal{constant loss},
\end{cases}
\end{IEEEeqnarray*}
where $n$ is dimension of the input signal space, $\const
P\rightarrow\infty$ is the average input signal power and $c\eqdef
c(n,\const P)<\infty$. In particular, 
$\lim\limits_{\const P\rightarrow\infty} \const C(\const P)=\infty$ in
finite-dimensional models. Intensity modulation and direct detection
(photon counting) is nearly
capacity-achieving in the limit $\const P\rightarrow\infty$, where
capacity is dominated by the first terms in $\const C(\const P)$ expressions.
\label{thm:main}
\end{theorem}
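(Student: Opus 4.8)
The plan is to push the input amplitude to infinity, where the nonlinear recursion degenerates to a \emph{linear} multiplicative (fading) channel, and then to evaluate the capacity of that channel in spherical coordinates, in which the radial (energy) coordinate carries all the unbounded information while the angular coordinates, confined to a compact sphere, contribute only an $\bigo(1)$ term. The first and principal step is therefore to prove that, as $\abs{\vect X}\to\infty$, the split-step action of the discrete stochastic NLS recursion \eqref{eq:nls} factors to leading order as $\vect Y=\matd H\vect X+\vect E$, with $\matd H\in\Complex^{n\times n}$ a random matrix whose law does \emph{not} depend on the input (and obeys moment bounds making $\log\norm{\matd H}$ and $\log\norm{\matd H^{-1}}$ integrable) and $\vect E$ an additive perturbation of bounded variance. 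The mechanism is that each Kerr sub-step multiplies the field by a phase $e^{\mathrm i\gamma\ell|u|^2}$ which, at large amplitude, winds many times; interleaved with the dispersive sub-steps and the injected distributed noise, the accumulated transformation equidistributes and becomes statistically indistinguishable from a random isometry modulated by the (possibly frequency-dependent) attenuation. Establishing this reduction — the input-independence of the law of $\matd H$, its moment bounds, and the boundedness of the variance of $\vect E$ against parametric amplification of noise by the strong signal — is where essentially all the difficulty lies; everything afterwards is a change of variables plus extremal-entropy estimates for scalar channels.

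The two loss regimes separate already here. In the constant-loss model the $L^2$ norm conserved by the noiseless discrete NLS flow forces $\matd H=a\,\matd U$ with $\matd U$ a random unitary, so that $\abs{\vect Y}=a\abs{\vect X}+\tilde Z$, where $\tilde Z$ (the projection of $\vect E$ onto the output direction) has bounded variance $\sigma^2$: the energy sees an \emph{additive} bounded noise. In the non-constant-loss model, nonlinear spectral broadening — whose extent grows with the instantaneous power — together with frequency-dependent attenuation makes $\matd H$ genuinely non-isometric, so that $\abs{\vect Y}=\abs{\vect X}\,G$ to leading order, with $\log G$ of bounded but strictly positive variance: the energy sees a \emph{multiplicative} noise. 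This contrast produces $\frac{1}{2}\log\const P$ versus $\log\log\const P$ below.

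Next I would pass to spherical coordinates, $\vect X=\abs{\vect X}\hat{\vect X}$ and $\vect Y=\abs{\vect Y}\hat{\vect Y}$, and use invariance of mutual information to write $I(\vect X;\vect Y)=I(\vect X;\log\abs{\vect Y}^2)+I(\vect X;\hat{\vect Y}\mid\log\abs{\vect Y}^2)$. The angular term is $\bigo(1)$: $h(\hat{\vect Y}\mid\log\abs{\vect Y}^2)\le\log\vol(S^{2n-1})<\infty$ because $\hat{\vect Y}$ takes values on the unit $(2n-1)$-sphere, whereas the conditional angular density of $\hat{\vect Y}$ given $\vect X$ stays bounded above uniformly — the statement that the effective area of the angular support does not collapse as $\abs{\vect X}\to\infty$ — so the subtracted conditional entropy is bounded below. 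Hence the direction contributes only to $c(n,\const P)$.

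It remains to analyze the radial channel under $\E\abs{\vect X}^2\le n\const P$. For constant loss, $\abs{\vect Y}=a\abs{\vect X}+\tilde Z$ is additive with bounded noise variance, so $I(\vect X;\log\abs{\vect Y}^2)\le\frac{1}{2}\log\bigl(1+a^2 n\const P/\sigma^2\bigr)=\frac{1}{2}\log\const P+\bigo(1)$, and the matching lower bound follows from a near-Gaussian amplitude input, i.e.\ from intensity modulation with direct detection. For non-constant loss, $\log\abs{\vect Y}^2=S+\log G^2$ with $S\eqdef\log\abs{\vect X}^2$ and $\log G^2$ of bounded variance; since the high-power reduction requires $S\ge S_0$ for a fixed $S_0$, and $\E e^{S}\le n\const P$ then confines the effective spread of $S$ to $\bigo(\log\const P)$, one gets $I(\vect X;\log\abs{\vect Y}^2)=h(S+\log G^2)-h(\log G^2)\le\log\log\const P+\bigo(1)$, with the matching lower bound from a roughly log-uniform $S$, again realized by intensity modulation and photon counting. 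Dividing by $n$ (capacity per signal dimension) and absorbing the uniformly bounded $\bigo(1)$ terms into $c(n,\const P)$ gives the two stated expressions with $c<\infty$; in particular $\const C\to\infty$ for finite $n$, and intensity modulation with direct detection is near-optimal because it already attains the leading term.
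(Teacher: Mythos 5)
Your overall route coincides with the paper's second proof: reduce the channel at large $\abs{\vect X}$ to a linear non-coherent fading channel $\vect Y=\matr M\vect X+\vect Z$ with an input-independent random matrix, decompose in spherical coordinates, extract the $\frac{1}{2n}\log\const P$ (resp.\ $\frac{1}{n}\log\log\const P$) leading term from the radial channel, and argue that the angular contribution is $\bigo(1)$. The radial analysis and the constant/non-constant loss dichotomy (unitary $\matr M$ giving additive bounded noise on $\abs{\vect Y}$ versus random $\abs{\matr M}$ giving multiplicative noise tamed by the logarithm) match the paper.

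The genuine gap is in the angular step. You assert that ``the conditional angular density of $\hat{\vect Y}$ given $\vect X$ stays bounded above uniformly,'' but this is precisely the nontrivial claim on which the theorem rests, and it does not follow from the fading reduction alone: it requires that $\matr M$ have \emph{no deterministic component}, i.e.\ that the joint law of its $n^2$ entries admits a density ($h(\matr M)>-\infty$ together with second-moment bounds), which is what guarantees $h(\matr M\hat{\vect x})>-\infty$ for every direction $\hat{\vect x}$. If even one functional of the entries of $\matr M$ were deterministic --- as happens when the dispersion matrix is not fully mixing (e.g.\ $\matd R=I_n$) or when too few split steps are composed so that the $n^2$ entries are algebraic functions of fewer than $n^2$ independent phases --- then some angular coordinate of $\hat{\vect Y}$ would be conditionally deterministic given $\vect X$, its conditional entropy would be $-\infty$, and that \dof\ would contribute its own unbounded $\frac{1}{2}\log\const P$, changing the leading constant of the theorem. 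This is exactly where dispersion enters the argument, and the paper spends most of its effort here: the counting argument in the one-unit lemma (free random variables $\Phi_k,\Psi_l,U$ versus constraint equations in $(W^{k-1},\angle W_k)$), the change-of-variables lemma giving $h(\matr M\vect x)= h(\matr M)+n\log\abs{x_1}-h(\{M_{ij}\}_{j\geq 2}|\vect Y)$, and the verification that $\matr M=\prod\matd R\matr D_k$ acquires a joint density once $m>2n$ stages are composed with a fully dispersive $\matd R$. Your proposal needs to identify and prove this non-degeneracy; as written, the step is asserted rather than established. (The reduction to the fading channel itself --- uniformity of the accumulated phase and escape to infinity through the units --- is also only sketched, but you correctly flag that as the locus of difficulty, whereas the non-degeneracy of $\matr M$ is not flagged at all.)
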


From the Theorem~\ref{thm:main} and \cite[Theorem 1]{yousefi2011opc}, the
asymptotic capacity of the dispersive fiber is much 
smaller than the asymptotic capacity of (the discrete-time model of)
the zero-dispersion 
fiber, which is $\frac{1}{2}\log\const P+c$, $c<\infty$. Dispersion
reduces the capacity, by increasing the conditional entropy. With $n$
\dofs\ at the input, $n-1$ \dofs\ are asymptotically lost to
signal-noise interactions, leaving signal energy as the only useful
\dof\ for transmission.

There are a finite number of \dofs\ in all computer
simulations and physical systems. 
However, as a mathematical problem, the following
Corollary holds true.

\begin{corollary} 
Capacity saturates to a constant $c<\infty$ in infinite-dimensional
models, including the continuous-time model.
\label{cor:inf}
\end{corollary}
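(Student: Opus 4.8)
The plan is to deduce Corollary~\ref{cor:inf} from Theorem~\ref{thm:main} by letting the signal-space dimension $n\to\infty$, the only extra ingredient needed beyond the Theorem itself being a power-uniform bound on $c(n,\const P)$. In both regimes of Definition~\ref{def:loss} the Theorem has the form $\const C(\const P)=\varepsilon_n(\const P)+c(n,\const P)$ with $\varepsilon_n(\const P)$ equal to $\tfrac1n\log\log\const P$ or $\tfrac1{2n}\log\const P$; since $\const C$ is a per-degree-of-freedom rate, $\varepsilon_n(\const P)\to0$ as $n\to\infty$ for each fixed $\const P$, so the whole question reduces to the behaviour of $c(n,\const P)$ as $n\to\infty$ and then $\const P\to\infty$.

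First I would fix the meaning of capacity for the infinite-dimensional channel: it is the per-degree-of-freedom rate obtained as the $n\to\infty$ limit of the finite-dimensional periodic models of Section~\ref{sec:mssfm} (equivalently, the spectral efficiency of the continuous-time channel as the bandwidth--time product grows). Then $\const C_\infty(\const P)=\lim_{n\to\infty}\const C_n(\const P)=\lim_{n\to\infty}c(n,\const P)$, and everything comes down to controlling $c(n,\const P)$ for finite $n$, uniformly as both $n$ and $\const P$ grow. Confining the input to a fixed $n$-dimensional band- and time-limited subspace is admissible and preserves the periodic boundary conditions and the distributed noise and loss, so this limit is approached from below by genuine sub-channels and no new channel effects appear.

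The load-bearing step is the uniform-in-$\const P$ bound on $c(n,\const P)$, and this is where the spherical-coordinate picture of the body of the paper is used. Decomposing $\vect X=(\abs{\vect X},\hat{\vect X})$ into norm and direction, one has the single-letter bound $\const C_n(\const P)\le\tfrac1n\bigl[\,I(\abs{\vect X};\vect Y)+I(\hat{\vect X};\vect Y\mid\abs{\vect X})\,\bigr]$. The energy term is a scalar channel and contributes at most $\bigo(\tfrac1n\log\const P)$, i.e.\ it is absorbed into $\varepsilon_n$ and vanishes in the limit. The direction term carries all the rest: by the random-matrix characterisation from the paper, the high-power discrete NLS flow acts on $\hat{\vect X}$ as multiplication by a random matrix with a \emph{fixed} law, independent of the input, namely a non-coherent linear fading channel, so that the conditional law of the output direction has effective support on the unit $(2n-1)$-sphere whose area stays bounded away from zero as $\const P\to\infty$; writing $I(\hat{\vect X};\hat{\vect Y})=h(\hat{\vect Y})-h(\hat{\vect Y}\mid\hat{\vect X})$ and lower-bounding the conditional entropy by the log of this non-vanishing effective area, the difference is at most a constant multiple of $n$, hence a $\const P$-independent constant per degree of freedom. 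Thus $c(n,\const P)\le c_0<\infty$ for all large $n$ and $\const P$, and since $\const C_\infty(\const P)$ is non-decreasing in $\const P$ it converges to a constant $c\le c_0$ --- whose value is that of the limiting non-coherent MIMO fading channel.

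I expect the main obstacle to be exactly this uniform control: one needs the convergence of the discrete NLS flow to the fixed-law random matrix to be \emph{quantitative} and uniform in $n$, so that the effective-area lower bound does not degrade as $\const P\to\infty$; and one must make rigorous sense of ``area of the unit sphere'' in the $n\to\infty$ limit, for instance by replacing the uniform measure on $S^{2n-1}\subset\Complex^n$ with a Gaussian reference measure on $L^2$ and bounding the corresponding relative entropy. Legitimising the interchange of the limits $n\to\infty$ and $\const P\to\infty$, together with the well-posedness of the average-power constraint in the infinite-dimensional state space, is the delicate analytic point; the remainder is bookkeeping built on Theorem~\ref{thm:main}.
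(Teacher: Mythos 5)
Your argument is essentially the paper's own: fix $\const P$, let $n\to\infty$ so that the logarithmic terms in Theorem~\ref{thm:main} vanish, and conclude $\const C\le c<\infty$ from the boundedness of $c(n,\const P)$ in both $n$ and $\const P$ --- a boundedness the paper builds directly into the statement of Theorem~\ref{thm:main} via its notational convention for $c$, whereas you (reasonably) re-derive it inside the corollary from the spherical decomposition and the non-vanishing effective support of the output direction. The one piece you gloss over is the continuous-time case: the paper handles it by discretizing in frequency to obtain a periodic model with infinitely many degrees of freedom and then invoking $\const C_c\le\const C_d$ (spectral broadening only reduces capacity), which gives the finiteness in the needed direction, while your ``approached from below by sub-channels'' phrasing points the inequality the wrong way and should be replaced by exactly that upper bound.
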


The power level where signal-noise interactions begin to appreciably impact the
slope of the  $\const C(\const P)$ is not determined in this paper. Numerical simulations indicate that 
the conditional entropy does not increase with input in the nonlinear
Fourier domain, for a range of power larger than 
the optimal power in wavelength-division multiplexing \cite[Fig.~9 (a)]{yousefi2016nfdm}.   
In this regime, signal-noise interactions are weak and the capacity is
dominated by the (large) number $c$ in the Theorem~\ref{thm:main}. 
A numerical estimation of the capacity of the point-to-point fiber 
at input powers higher than those in Fig.~\ref{fig:nfdm} should reveal 
the impact of the signal-dependent noise on the asymptotic capacity.

The contributions of the paper are presented as follows. 
The continuous-time model is discretized
in Section~\ref{sec:mssfm}. The main ingredient is a modification
of the split-step Fourier method (SSFM) that shows noise influence more directly
compared with the standard SSFM. A \emph{unit} is defined in the modified SSFM (MSSFM) model that plays an
important role throughout the paper. The MSSFM and units simplify the
information-theoretic analysis. 

Theorem~\ref{thm:main} and
Corollary~\ref{cor:inf} are proved in Section~\ref{sec:proof1}. The
main ingredient here is an appropriate partitioning of the \dofs\ in a suitable
coordinate system, and the proof that the achievable rate of one group of \dofs\ is bounded in input. No assumption is made on
input power in this first proof. 

Theorem~\ref{thm:main} is proved again in Section~\ref{sec:proof2} by
considering the limit $\const P\rightarrow\infty$, 
which adds further intuition. Firstly, it is shown that, as the input
$\abs{\vect X}\rightarrow\infty$, the action of the discrete periodic stochastic nonlinear
Schr\"odinger (NLS) equation tends to multiplication by a random
matrix (with fixed probability distribution function (PDF), independent of the input). As a result,  perhaps counter-intuitively, as $\abs{\vect X}\rightarrow\infty$
noise simplifies the nonlinear channel to a 
\emph{linear} multiple-input multiple-output (non-coherent) fading
channel. Secondly, the asymptotic capacity is computed, without
calculating the conditional PDF of the channel, entropies, or solving the capacity optimization
problem. Because of the multiplicative noise, the asymptotic rate
depends only on the knowledge that whether channel random operator
has any deterministic component. The conditional PDF merely modifies the
bounded number $c$ in the Theorem~\ref{thm:main}.  

Note that we do not apply local analysis based on perturbation theories (valid in
the low power regime). The proof of the Theorem~\ref{thm:main}, \eg,
the asymptotic loss of \dofs, is based on a
global analysis valid for any signal and noise; see Section~\ref{sec:proof1}.


\section{Notation and Preliminaries}
\label{sec:notation}

The notation in this paper is motivated by \cite{moser2004dbb}.
Upper- and lower-case letters represent scalar random variables and
their realizations, \eg, $X$ and $x$. The same rule
is applied to vectors, which are distinguished using underline, \eg,
$\vect X$ for a random vector and $\vect x$ for a deterministic
vector. Deterministic matrices are shown by
upper-case letter with a special font, \eg, $\matd R=(r_{ij})$.
Random matrices are denoted by upper-case letters
with another special font, \eg, $\matr M=(M_{ij})$. 
Important scalars are distinguished with calligraphic font, \eg, $\const P$ for power and 
$\const C$ for capacity. The field of real and complex numbers is
respectively $\Reals$ and $\Complex$.

A sequence of numbers $X_1,\cdots, X_n $ is sometimes abbreviated as $X^n$, $X^0=\emptyset$.
A zero-mean circularly-symmetric complex Gaussian random vector with covariance 
matrix $\matd K$ is indicated by $\normalc{0}{\matd K}$. Uniform distribution on interval 
$[a,b)$ is designated as $\mathcal U(a,b)$.

Throughout the paper, the asymptotic equivalence $\const C(\const P) \sim
f(\const P)$, often abbreviated by saying
``asymptotically,'' means that $\lim_{\const P\rightarrow\infty}
\const C(\const P)/f(\const P)=1$. Letter $c\eqdef c(n,\const P)$ is reserved
to denote a real number bounded in $n$ and $\const P$. A sequence of
independent and identically distributed (\iid) random variables $X_n$ drawn from the PDF $p_X(x)$ is presented as $X_n\sim\iid\ p_X(x)$. 
The identity matrix with size $n$ is $I_n$.

The Euclidean norm of a vector $\vect x\in\Complex^n$ is
\begin{IEEEeqnarray*}{rCl}
\abs{x}=\left(|x_1|^2+\cdots+|x_n|^2\right)^{\frac{1}{2}}.
\end{IEEEeqnarray*}
This gives rise to an induced norm $\abs{\matd M}$ for matrix $\matd M$. 
We use the spherical coordinate system in the paper. Here, a vector $\vect x\in\Complex^n$ is 
represented by its norm $\abs{\vect x}$ and direction $\hat{\vect
  x}=\vect x/\abs{\vect x}$ (with
convention $\hat{\vect x}=0$ if $\vect x=0$). The direction can be
described by $m=2n-1$ angles.

When direction is random, its entropy  can be measured with respect to the 
spherical measure $\sigma^{m}(A)$, $A\subseteq \mathcal S^m$, where
$\mathcal S^m$ is the $m-$sphere
\begin{IEEEeqnarray*}{rCl}
\mathcal S^m=\left\{\hat{ \vect x}\in\Reals^{m+1}: ~\abs{\hat{\vect x}}=1 \right\}.
\end{IEEEeqnarray*}
It is shown in the Appendix~\ref{app:one} that the differential
entropy with respect to the Lebesgue and 
spherical measures, denoted  respectively by $h(\hat{\vect X})$ and $h_{\sigma}(\hat{\vect X})$, are related 
as
\begin{IEEEeqnarray}{rCl}
h(\vect X)=h(\abs {\vect X})+h_{\sigma}(\hat{\vect X}| \abs{\vect X})+m\E\log |X|.
\label{eq:sph-leb}
\end{IEEEeqnarray}
The entropy power of a random direction $\hat{\vect X}\in\Complex^n$ is
\begin{IEEEeqnarray*}{rCl}
  \vol(\hat{\vect X})=\frac{1}{2\pi e}\exp\bigl(\frac{2}{m}h_{\sigma}(\hat{\vect X})\bigr).
\end{IEEEeqnarray*} 
It represents the effective area of the support of  $\hat{\vect X}$
on $\mathcal S^m$.


\section{The Modified Split-Step Fourier Method}
\label{sec:mssfm}
Signal propagation in optical fiber is described by the
stochastic nonlinear Schr\"odinger (NLS) equation \cite[Eq. 2]{yousefi2012nft1}
\begin{IEEEeqnarray}{rCl}
\frac{\partial Q}{\partial z}=L_L(Q)+L_N(Q)+N(t,z),
\label{eq:nls}
\end{IEEEeqnarray}
where $Q(t,z)$ is the complex envelope of the signal as a function of time
$t\in\Reals$ and space $z\in\Reals^+$ and $N(t,z)$ is zero-mean circularly-symmetric 
complex Gaussian noise with 
\begin{IEEEeqnarray*}{rCl}
  \E \bigl(N(t,z)N^*(t',z')\bigr)=\sigma^2\delta_{\const W}(t-t')\delta(z-z'),
\end{IEEEeqnarray*}
where $\delta_{\const W}(x)\eqdef 2\const W\sinc(2\const W x)$,
$\sinc(x)\eqdef \sin(\pi x)/(\pi x)$, and $\const W$ is noise bandwidth. The operator $L_L$ represents linear effects 
\begin{IEEEeqnarray}{rCl}
  L_L(Q)= \sum\limits_{k=0}^\infty
j^{k+1} \frac{\beta_k}{k!}\frac{\partial^k Q}{\partial t^k}
-\frac{1}{2}\alpha_{r}(t,z)\convolution Q(t,z), 
\label{eq:L-L}
\end{IEEEeqnarray}
where $\beta_k$ are dispersion coefficients, $\convolution$ is
convolution and
$\alpha_{r}$ is 
the residual fiber loss. The operator $L_N(Q)=j\gamma |Q|^2Q$ represents Kerr nonlinearity, where
 $\gamma$ is the nonlinearity parameter. The average power of 
the transmit signal is   
\begin{IEEEeqnarray}{rCl}
\const P= \lim\limits_{\mathcal T\rightarrow\infty}\E\frac{1}{\mathcal T}\int\limits_{-\mathcal T/2}^{\mathcal T/2}|Q(t,0)|^2\der t.
\label{eq:power-cont}
\end{IEEEeqnarray}

\begin{definition}[Loss Models]
The residual loss in \eqref{eq:L-L} accounts for uncompensated loss
and non-flat 
gain of the Raman amplification in distance and is
generally frequency dependent. The constant loss model refers to the case where
$\alpha_{r}(t,z)$ is constant in
the frequency $f$, \ie, 
$\hat{\alpha}_{r}(f,z)=\mathcal F(\alpha(t,z))\eqdef\alpha_{r}(z)$,
where $\mathcal F$ is the Fourier transform with respect to $t$. In
realistic systems, however, loss varies over frequency,
polarization or spatial models. This is the non-constant loss
model. Channel filters act similar to a non-constant loss function.
\label{def:loss}
\qed
\end{definition}

We discretize \eqref{eq:nls} in space and time. Divide a fiber of length $\const L$ 
into a cascade of a large number $m\rightarrow\infty$ of pieces of discrete fiber segments 
of length $\epsilon=\const L/m$ \cite[Section III. A]{yousefi2011opc}. A small
segment can be discretized in time and modeled in several ways. An appropriate approach is
given by the split-step Fourier method (SSFM).

The standard SSFM splits the \emph{deterministic} NLS equation into
linear and nonlinear parts. In applying SSFM to the \emph{stochastic} NLS equation, typically
noise is added to the signal. We introduce  a modified split-step Fourier method
where, instead of noise addition, the nonlinear part of
\eqref{eq:nls} is solved in the presence of noise analytically. 

In the linear step,  \eqref{eq:nls}  is solved with $L_N+N=0$. In the
discrete-time model,  linear step in a segment of length $\epsilon$
consists of multiplying a vector $\vect X\in\Complex^n$ by the 
dispersion-loss matrix $\matd R=(r_{kl})$.  In the constant loss model,
$\matd R=e^{-\frac{1}{2}\alpha_{r}\epsilon}\matd U$,
where $\matd U$ is a unitary matrix. In the absence of loss, $\matd R$ is
unitary. The values of 
$r_{kl}$ depend on the dispersion coefficients, $\epsilon$ and $n$. In general, all entries of $\matd R$ are non-zero, although in a small
segment, the off-diagonal elements can be very small. 

\begin{assumption}
Matrix $\matd R$ is fully dispersive, \ie, $r_{kl}\neq 0$, for all $k,l$.  
\label{ass:U}
\qed
\end{assumption}

In the nonlinear step, \eqref{eq:nls}  is solved with $L_L=0$
resulting in \cite[Eq. 12]{mecozzi1994llh}, \cite[Eq. 30]{yousefi2011opc}:
\begin{IEEEeqnarray}{rCl}
Q(t, z)=\left(Q(t,0)+W(t,z)\right)e^{j\Theta(t,z)},
\label{eq:zd}
\end{IEEEeqnarray}
in which
\begin{IEEEeqnarray*}{rCl}
\Theta(t,z)= \gamma\int\limits_0^z\Bigl|Q(t,0)+W(t,l)\Bigr|^2\der l,
\end{IEEEeqnarray*}
where $W(t,z)=\int_0^z N(t,l)\der l$ is Wiener process.
The modified nonlinear step in the MSSFM is obtained by discretizing \eqref{eq:zd}. Divide
a small segment $0\leq z\leq\epsilon$ into $L$ sub-segments of length
$\mu=\epsilon/L$. Define $\Phi:\Complex\times\Complex^n\mapsto [0,\infty)$ as
\begin{IEEEeqnarray}{rCl}
\Phi(X,\vect N)&=&\gamma\mu\underbrace{|X+N_1|^2+\gamma\mu|X+N_1+N_2|^2}_{\textnormal{signal-noise interactions, unknown}}
+\cdots \nn\\
&&+\gamma\mu\underbrace{|X+N_1+\cdots+N_L|^2}_{\textnormal{conditionally known}},
\label{eq:phase}
\end{IEEEeqnarray}
where $N_k\sim \iid\ \normalc{0}{\D/L}$, $\D=\sigma^2\const W\epsilon/n$.
The nonlinear step in a segment of length $\epsilon$ maps vector
$\vect X\in\Complex^n$ to vector $\vect Y\in\Complex^n$, according to 
\begin{IEEEeqnarray}{rCl}
 Y_k=\left(X_k+N_{k1}+\cdots+N_{kL}\right)e^{j\Theta(X_k, \vect N_k)},
\label{eq:discrete-zd}
\end{IEEEeqnarray}
where $\vect{N}_k=(N_{k1},\cdots, N_{kL})^T$, $N_{ki}\sim\iid\ \normalc{0}{\D/L}$.

The nonlinear step is a deterministic phase change in the SSFM. In this form, nonlinearity 
is entropy-preserving and does not interact 
with noise immediately  \cite[Lemma~2--3]{yousefi2015cwit2} --- unless several steps in the SSFM are
considered, which complicates the analysis. In the MSSFM, noise is introduced
in a distributed manner within each nonlinear step. This shows noise
influence more directly. 

Note that, conditioned on $\abs{Y_k}$, the last term in \eqref{eq:phase} is known. Other terms in \eqref{eq:phase} 
represent signal-noise interactions. They are conditionally unknown and are responsible for capacity 
limitation.

The MSSFM model for a fiber of length $\const L$ consists of the
cascade of linear and modified nonlinear steps (without noise addition between them).

\begin{definition}[Unit]
A \emph{unit} in the MSSFM model is defined as the cascade of three
segments of length $\epsilon$: A modified nonlinear step $\vect X\mapsto\vect U$, followed by a linear step
$\vect U\mapsto\vect V$, followed by another modified nonlinear step 
$\vect V\mapsto\vect Y$; see Fig.~\ref{fig:mssfm}. A unit of length $3\epsilon$ 
is the smallest piece of fiber whose capacity behaves qualitatively
similar to the capacity of the full model with length $\const L$.
\qed
\end{definition}

In the Appendix~\ref{app:in-out-mssfm} it is shown that the input
output relation $\vect X\mapsto \vect Y$ in one unit is given by
\begin{IEEEeqnarray}{rCl}
\vect Y=\matr M\vect X+\vect Z,\quad
\label{eq:one-seg}
\end{IEEEeqnarray}
where $\matr M\eqdef \matr M(\vect X,\matr N^1,\matr N^2)$ is a random matrix with entries
\begin{IEEEeqnarray}{rCl}
M_{kl}= r_{kl}e^{j\Phi_k+j\Psi_l},
\label{eq:Mkl}
\end{IEEEeqnarray}
in which
\begin{IEEEeqnarray*}{rCl}
\Psi_l=\Theta(X_l,\vect N_l^1),\quad \Phi_k=\Theta(V_k, \vect N_k^2).
\end{IEEEeqnarray*}
Here $\matr N^1=(\vect N_1^1,\cdots, \vect N_n^1)^T$ and $\matr
N^2=(\vect N_1^2,\cdots, \vect N_n^2)^T$ are $n\times L$ Gaussian ensembles with
\iid\ entries drawn from $\normalc{0}{\D/L}$, independent
of any other random variable. The additive noise $\vect Z\eqdef \vect Z(\vect X,\matr N^1,\matr N^2)$ is in
general non-Gaussian but bounded in $\abs{\vect X}$; see \eqref{eq:noise}. Finally, vector $\vect V$ is the output of
the linear step in Fig.~\ref{fig:mssfm}.

The input output relation $\vect X\mapsto \vect Y$ in a fiber of
length $\const L$ is obtained by composing $\bar m=m/2$ blocks $\vect Y_k=\matd
R\bigl(\matr M_k\vect X_k+\vect Z_k\bigl)$: 
\begin{IEEEeqnarray}{rCl}
\vect Y(k)=\matr M\bigl(k\bigr)\vect X(k)+\vect Z(k),
\label{eq:m-seg}
\end{IEEEeqnarray}
where $k=1,2,\cdots,$ is the transmission index, $\{\vect Z(k)\}_{k}$
is an \iid\ stochastic process, and 
\begin{IEEEeqnarray}{rCl}
\matr M(k)= \prod\limits_{k=1}^{\bar m} \matd R\matr M_k,~ \vect
Z(k)=\matd R\vect Z_{\bar m}+\sum\limits_{k=1}^{\bar m-1}
\Bigl(\prod\limits_{l=k+1}^{\bar{m}} \matd R\matr M_l\Bigr)\matd
R\vect Z_k.
\IEEEeqnarraynumspace
\label{eq:M-m-seg}
\end{IEEEeqnarray}
The power constraint \eqref{eq:power-cont} is discretized to
$\const P=\frac{1}{n}\E\norm{\vect X}^2$ in the discrete-time model.

\begin{remark}[Bandwidth Assumption]
Bandwidth, spectral broadening and spectral efficiency in the
continuous-time model are discussed
in Section~\ref{sec:cor}.
\qed
\end{remark}

\begin{remark}[Nonlinearity]
Note that $\matr M(\vect X,\matr N^1,\matr N^2)$ is a nonlinear
random operator. Particularly, it depends on
input. 
\qed
\end{remark}

\begin{remark}[Signal Dimension]
Dimension of the input space is $n$. 
To approximate the continuous-time model, $n\rightarrow\infty$. However,  
we let $n$ be arbitrary, \eg, $n=5$. Dimension should not be
confused with codeword length that tends to infinity.
\qed
\end{remark}

\begin{figure}
\centering{
\includegraphics{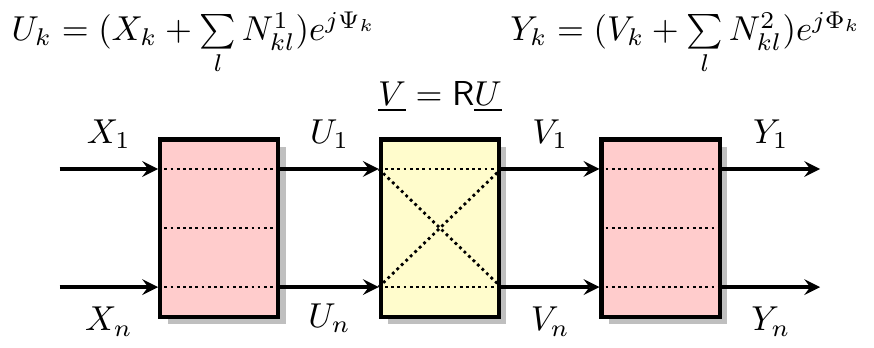}
}
\caption{One unit in the MSSFM. }
\label{fig:mssfm}
\end{figure}


\section{Proof of the Theorem~\ref{thm:main}}
\label{sec:proof1}

We first illustrate the main ideas of the proof via elementary examples.
 
Consider the additive white Gaussian noise (AWGN) channel $Y=X+Z$, where $X\in\Complex$ is input, $Y\in\Complex$ is output 
and $Z\sim\normalc{0}{1}$ is noise. Applying chain rule to the mutual
information
\begin{IEEEeqnarray*}{rCl}
  I(X;Y)=I(X;\abs Y)+I(X;\angle Y|\abs Y),
\end{IEEEeqnarray*}
where $\angle$ denotes phase. The amplitude channel $X\mapsto |Y|$ is 
\[
\abs{Y}\approx \abs{X}+Z_r,
\]
where $Z_r\sim\normalc{0}{\frac{1}{2}}$ and $\abs{X}\gg 1$. It asymptotically
contributes 
\begin{IEEEeqnarray*}{rCl}
  I(X; |Y|)\rightarrow \frac{1}{2}\log\const P+c
\end{IEEEeqnarray*}
to the capacity.

Phase, on the 
other hand, is supported on the finite interval $[0,2\pi)$. The only way that the contribution of the phase to 
the capacity could tend to infinity is that, phase noise tends to zero on the circle as $\abs X\rightarrow\infty$. Indeed,  
\begin{IEEEeqnarray*}{rCl}
\angle Y&=&\angle X+\tan^{-1}\left(\frac{Z_i}{\abs{X}+Z_r}\right)\\
&\approx&\angle X+\frac{Z_i}{\abs{X}},
\end{IEEEeqnarray*}
where $Z_r, Z_i\sim\iid\ \normalc{0}{\frac{1}{2}}$. The output entropy is clearly bounded, $h(\angle Y|\abs Y)\leq \log 2\pi$. However,
\begin{IEEEeqnarray}{rCl}
h(\angle Y|X, \abs Y)&=&h(Z_i)-\E\log|X|\nn\\
&\rightarrow&-\frac{1}{2}\log\const P+c,\quad \textnormal{as}\quad\const
P\rightarrow\infty.
\label{eq:cond-ent-awgn}
\end{IEEEeqnarray}
Note that the differential entropy can be negative. The contribution of the phase to the mutual information is 
\begin{IEEEeqnarray*}{rCl}
  I(X;\angle Y|\abs Y)\rightarrow \frac{1}{2}\log\const P+c'.
\end{IEEEeqnarray*}
Condition \eqref{eq:cond-ent-awgn} implies
 $\vol(\angle Y | X, \abs Y)\rightarrow 0$, \ie, the effective phase
 noise on the unit circle asymptotically vanishes.
 
Now consider the fading channel $Y=MX+Z$, where $X\in\Complex$ is
input, $Y\in\Complex$ is output and
$M, Z\sim\iid\ \normalc{0}{1}$. To prepare for generalization to 
optical channel, we represent a complex scalar $X$ as $\vect X=(\Re X,
\Im X)^T$. Thus
$\vect Y=\matr M\vect X+\vect Z$, where 
\begin{IEEEeqnarray*}{rCl}
\matr{M}=
\begin{pmatrix}
M_r & -M_i\\
M_i & M_r
\end{pmatrix},\quad 
\vect Z=
\begin{pmatrix}
Z_r\\
Z_i
\end{pmatrix}
,
\end{IEEEeqnarray*}
in which $M_{r,i},Z_{r,i}\sim\iid\ \normalc{0}{\frac{1}{2}}$. As $\abs{ \vect X}\rightarrow\infty$, $\vect Y\approx \matr{M}\vect X$, $\hat{\vect Y}\approx
\matr{M}\hat{\vect X}/\bigl|\matr M\hat{\vect X}\bigr|$, and randomness in 
$\hat{\vect Y}$  does not vanish with $\abs{X}$. Formally, 
\begin{IEEEeqnarray}{rCl}
h_{\sigma}(\hat{\vect{Y}}| \vect X, \abs{\vect Y})&=&
h_{\sigma}(\hat{\vect{Y}}\bigl| \matr M^{-1}\vect Y,\abs{\vect Y})
\nn\\
&=&
h_{\sigma}(\hat{\vect{Y}}| \matr M^{-1}\hat{\vect Y}, \abs{\vect Y})
\nn\\
&>&-\infty,
\label{eq:cond-entr-fad}
\end{IEEEeqnarray}
where \eqref{eq:cond-entr-fad} follows because $\vect a=\matr M^{-1}\hat{\vect Y}$ does not
determine $\hat{\vect Y}$ for random $\matr M$: There are four random
variables $M_{r,i}$ and $\hat{\vect Y}_{1,2}$ for three equations
$\matr M^{-1}\hat{\vect Y}=\vect a$ and $|\hat{\vect Y}|=1$. As a result, $I(\vect X;\hat{\vect Y}|\abs{\vect Y})<\infty$, and
$\abs{\vect Y}$ is the only useful \dof\ at high powers,
in the sense that its contribution $I(\vect X; \abs{\vect Y})$ to the
mutual information $I(\vect X; \vect Y)$ tends to infinity with
$\abs{\vect X}$.

The zero-dispersion optical fiber channel \eqref{eq:zd} is similar to the fading channel at high powers. 
The trivial condition
\begin{IEEEeqnarray*}{rCl}
h(\Theta(.,z) \Bigl| Q(.,0), |Q(.,z)|)>-\infty, \quad\forall Q(.,0),
\end{IEEEeqnarray*}
is sufficient to prove that the capacity of \eqref{eq:zd} is asymptotically the capacity of the amplitude channel, namely
$\frac{1}{2}\log\const P+c$. 

The intuition from the AWGN, fading and zero-dispersion channels
suggests to look at the dispersive optical channel in the
spherical coordinate system. The mutual information can be decomposed
using the chain rule
\begin{IEEEeqnarray}{rCl}
I(Q(0); Q(z))&=&I(\abs{Q(0)} ; Q(z))+I(\hat{Q}(0); Q(z)|\abs{Q(0)})\nn\\
&=&I(\abs{Q(0)} ; \abs{Q(z)})+I(\abs{Q(0)} ; \hat{Q}(z)\bigr|\abs{Q(z)})\nn\\
&&+I(\hat{Q}(0); Q(z)|\abs{Q(0)}),
\label{eq:I3}
\end{IEEEeqnarray}
where we dropped time index in $Q(t,z)$.

The first term in \eqref{eq:I3} is the rate of a single-input
single-output channel which can be computed in the asymptotic limit as follows. Let 
$\vect X$ and $\vect Y$ represent discretizations of the input $Q(0,.)$ and
output $Q(z,.)$. Consider first the lossless model. In this case, $\matr M$ is
unitary and from \eqref{eq:m-seg}, \eqref{eq:M-m-seg} and \eqref{eq:noise}
\begin{IEEEeqnarray}{rCl}
\abs{\vect Y}^2&=&\abs{\matr M\vect X+\vect Z}^2\nn\\
&=&\abs{\vect X+\matr M^\dag\vect Z}^2\nn\\
&=&\abs{\vect X+\vect Z}^2,
\label{eq:chi-squared}
\end{IEEEeqnarray}
where $\matr M^\dag$ is the adjoint (nonlinear) operator and
\eqref{eq:chi-squared} follows because $\vect Z$ and $\matr
M^\dag\vect Z$ are identically 
distributed when $\vect Z\sim\normalc{0}{m\D I_n}$; see Appendix~\ref{app:in-out-mssfm}.
Thus $|\vect Y|^2/(m\D)$ is a non-central chi-square random variable with $2n$ degrees-of-freedom and 
parameter $\abs{\vect x}^2/(m\D)$. The non-central chi-square conditional 
PDF $p(|\vect y|^2||\vect x|^2)$ can be
approximated at large $\abs{\vect x}^2$ using the Gaussian PDF, 
giving the asymptotic rate 
\begin{IEEEeqnarray}{rCl}
I(\abs{\vect X}; \abs{\vect Y})\rightarrow\frac{1}{2}\log\const P+c.
\label{eq:I(|X|;|Y|)} 
\end{IEEEeqnarray}
The bounded number $c$ can be computed using the exact PDF.

The case $\alpha_{r}(z)\neq 0$ is similar to the lossless case. Here $\matr
M=e^{-\frac{1}{2}\alpha_{r}\const L}\matr U$, where $\matr U$ is
a random unitary operator. Thus, $\matr
M^\dag=e^{-\frac{1}{2}\alpha_{r}\const L}\matr U^\dag$; furthermore  
$\abs{\matr M}=e^{-\frac{1}{2}\alpha_r\const L}$ is deterministic. The loss simply
influences the signal power, modifying constant $c$ in \eqref{eq:I(|X|;|Y|)}.

In the non-constant loss model, loss interacts with nonlinearity,
dispersion and noise. Here, $\abs{\matr M}$ is a random variable, and  
\begin{IEEEeqnarray}{rCl}
\abs{\vect Y}=\abs{\vect X}\abs{\matr M}\Bigl|\hat{\matr M}\hat{\vect
    X}+\frac{\vect Z}{\abs{\matr M}\abs{\vect X}}\Bigr|,
 \label{eq:multi-chan} 
\end{IEEEeqnarray}
where $\hat{\matr M}=\matr M/\abs{\matr M}$. Taking logarithm
\begin{IEEEeqnarray}{rCl}
\log\abs{\vect Y}=\log\abs{\vect X}+\log\abs{\matr M}+\log \Bigl|\hat{\matr M}\hat{\vect X}+\frac{\vect Z}{\abs{\matr M}\abs{\vect X}}\Bigr|.
\label{eq:log-fading}
\end{IEEEeqnarray}
Applying Lemma~\ref{lemm:decomposition}, we can assume $\abs
X>x_{0}$ for a suitable $x_{0}>0$ without changing the asymptotic capacity. The last term in
\eqref{eq:log-fading} is a bounded real random variable because 
\begin{IEEEeqnarray*}{rCl}
\sup_{\abs{\hat{\vect x}}=1, \abs{\hat{\matr M}}=1} \E \Bigl|\hat{\matr M}\hat{\vect x}+\frac{\vect Z}{\abs{\matr M}\abs{\vect 
X}}\Bigr|^2<\infty.
\end{IEEEeqnarray*}
Thus, the logarithm transforms the channel \eqref{eq:multi-chan} with multiplicative noise
$\abs{\matr M}$ to the channel \eqref{eq:log-fading} with additive bounded noise. The asymptotic capacity, independent of the PDF of $\abs{\matr M}$, is
\begin{IEEEeqnarray*}{rCl}
I(\abs{\vect X}; \abs{\vect Y})&\rightarrow& \frac{1}{2}\log
\E\left(\log(\abs{X})\right)^2+c\\
&=&\log\log\const P+c'.
\end{IEEEeqnarray*}

The last two terms in \eqref{eq:I3} are upper bounded in one unit of
the MSSFM using the data processing inequality 
\begin{IEEEeqnarray}{rCl}
I\left(\hat Q(0); Q(z)\bigl|\abs{Q(0)}\right)&\leq& I\left(\hat Q(0); Q(3\epsilon)\bigl|\abs{Q(0)}\right),
\label{eq:dp1}
\\
I\left(\abs{Q(0)}; \hat{Q}(z)\bigl| \abs{Q(z)}\right)&\leq&
I\left(\abs{Q(0)}; \hat Q(3\epsilon)\bigl|\abs{Q(3\epsilon)}\right).
\IEEEeqnarraynumspace
\label{eq:dp2}
\end{IEEEeqnarray}
We prove that the upper bounds in \eqref{eq:dp1}--\eqref{eq:dp2} do not scale with input $\abs{Q(0)}$. 

Let $\vect X,\vect Y\in\Complex^n$ denote discretization of $Q(0,t)$ and $Q(3\epsilon,t)$. 

\begin{lemma}
In one unit of the MSSFM
\begin{IEEEeqnarray}{rCl}
\IEEEyesnumber
\sup\limits_{\abs{\vect x}}\frac{1}{n}I(\hat{\vect X}; \vect Y\bigr|\abs{\vect x})&<&\infty,
\IEEEyessubnumber
\label{eq:hatX-Y}
\\
\sup\limits_{\abs{\vect y}}\frac{1}{n}I(\abs{\vect X}; \hat{\vect
  Y}\bigr|\abs{\vect y})&<&\infty.
\IEEEyessubnumber
\label{eq:absX-hatY}
\end{IEEEeqnarray}

\end{lemma}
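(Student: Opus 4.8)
The plan is to use the explicit structure of one unit. By \eqref{eq:one-seg}--\eqref{eq:Mkl}, one unit realizes $\vect Y=\matr M\vect X+\vect Z$ with $\matr M=\matd D_{\Phi}\matd R\matd D_{\Psi}$, where $\matd D_{\Phi}=\diag(e^{j\Phi_1},\dots,e^{j\Phi_n})$ and $\matd D_{\Psi}=\diag(e^{j\Psi_1},\dots,e^{j\Psi_n})$ are unitary, $\matd R$ is the fixed dispersion--loss matrix, and $\vect Z$ is bounded in $\abs{\vect X}$. The proof rests on three facts: the moduli $\abs{M_{kl}}=\abs{r_{kl}}$ are deterministic and, by Assumption~\ref{ass:U}, strictly positive; all randomness of $\matr M$ is carried by the $2n$ phases $\Phi_k,\Psi_l$; and $\hat{\vect X},\hat{\vect Y}$ live on the compact sphere $\mathcal S^m$, $m=2n-1$, so every spherical differential entropy is at most $\log\sigma^m(\mathcal S^m)<\infty$.

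For \eqref{eq:hatX-Y}, the chain rule on $\mathcal S^m$ gives $I(\hat{\vect X};\vect Y\mid\abs{\vect x})=h_\sigma(\hat{\vect X}\mid\abs{\vect x})-h_\sigma(\hat{\vect X}\mid\vect Y,\abs{\vect x})\le\log\sigma^m(\mathcal S^m)-h_\sigma(\hat{\vect X}\mid\vect Y,\abs{\vect x})$, so it suffices to bound $h_\sigma(\hat{\vect X}\mid\vect Y,\abs{\vect x})$ below, uniformly in $\abs{\vect x}$ and the input law. Symmetrically, for \eqref{eq:absX-hatY}, $I(\abs{\vect X};\hat{\vect Y}\mid\abs{\vect y})=h_\sigma(\hat{\vect Y}\mid\abs{\vect y})-h_\sigma(\hat{\vect Y}\mid\abs{\vect X},\abs{\vect y})\le\log\sigma^m(\mathcal S^m)-h_\sigma(\hat{\vect Y}\mid\vect X,\vect Z,\abs{\vect y})$, the last step since extra conditioning can only decrease entropy, so it suffices to bound $h_\sigma(\hat{\vect Y}\mid\vect X,\vect Z,\abs{\vect y})$ below, uniformly. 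Thus both parts reduce to the claim that the random phases keep $\hat{\vect X}$ (given $(\vect Y,\abs{\vect x})$), resp. $\hat{\vect Y}$ (given $(\vect X,\vect Z,\abs{\vect y})$), spread over a portion of $\mathcal S^m$ whose effective area does not vanish.

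To see why this spread persists, use Lemma~\ref{lemm:decomposition} to take $\abs{\vect x}$ large. Given $(\vect Y,\abs{\vect x})$, the relation $\matd R\matd D_{\Psi}\hat{\vect x}=\matd D_{\Phi}^{-1}(\vect y-\vect Z)/\abs{\vect x}$ and the boundedness of $\vect Z$ determine $\hat{\vect x}$ only through the $n$ moduli $\abs{(\matd R\matd D_{\Psi}\hat{\vect x})_k}=\abs{y_k}/\abs{\vect x}+\bigo(1/\abs{\vect x})$, with $\matd D_{\Psi}$ unknown; ignoring the $\bigo(1/\abs{\vect x})$ term, the $4n-1$ real parameters $(\hat{\vect x},\vect\Phi,\vect\Psi)$ satisfy $2n$ real equations, so the admissible $\hat{\vect x}$ form an open subset of $\mathcal S^m$, and the normalization by $\abs{\vect x}$ makes its area independent of $\abs{\vect x}$. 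The analogous statement for $\hat{\vect Y}$ given $(\vect X,\vect Z,\abs{\vect y})$ follows from the $2n$ unknown phases against the single norm constraint. Full dispersion (Assumption~\ref{ass:U}) is exactly what keeps these admissible sets from collapsing onto a lower-dimensional or vanishing subset --- for $\matd R$ diagonal they would be single tori of zero spherical measure --- so their areas are bounded below by constants depending only on $\matd R$ and $n$, the conditional spherical entropies are bounded below, and dividing by $n$ gives \eqref{eq:hatX-Y} and \eqref{eq:absX-hatY}.

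The real work is the quantitative, uniform version of this spread. One must show that the $(2n-1)$-dimensional map from the phases onto $\mathcal S^m$ is a submersion with Jacobian bounded below --- uniformly over all unit directions, over $\abs{\vect x}$ and $\abs{\vect y}$, and over the conditioning noise --- and that this survives the expectation over the output (no non-integrable singularity as the admissible set shrinks toward the boundary of its feasible range). Uniformity over the input law is the subtlest point: a concentrated input produces a correspondingly concentrated posterior, so the bound cannot be read off from a worst-case likelihood ratio; one argues instead that the posterior of $\hat{\vect X}$ given $(\vect Y,\abs{\vect x})$ is on average as spread as the prior --- equivalently that the output entropy and the conditional output entropy shrink by the same amount when the input uses a confined (e.g.\ coordinate-subspace) direction --- and it is here that Assumption~\ref{ass:U} is used essentially, since a partially dispersive $\matd R$ lets the admissible set collapse and the conditional spherical entropy diverge to $-\infty$, so finite-dimensional models with partially dispersive $\matd R$ genuinely differ. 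Directions $\hat{\vect x}$ lying exactly in a coordinate subspace, where the phase map alone degenerates, are handled as a measure-zero exception or via the smearing by $\vect Z$, and bounded ranges of $\abs{\vect x}$ (resp.\ $\abs{\vect y}$) are dealt with separately by continuity and compactness.
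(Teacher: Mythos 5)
There is a genuine gap in your treatment of \eqref{eq:hatX-Y}, and it comes from the direction in which you expand the mutual information. You write $I(\hat{\vect X};\vect Y\mid\abs{\vect x})=h_\sigma(\hat{\vect X}\mid\abs{\vect x})-h_\sigma(\hat{\vect X}\mid\vect Y,\abs{\vect x})$ and then try to lower-bound the posterior entropy of the \emph{input} direction. But the lemma must hold uniformly over the input law (it is used to bound capacity), and for a general input $\hat{\vect X}$ need not even admit a density on $\mathcal S^m$: for a discrete or partially concentrated input both terms of your decomposition are $-\infty$ and the difference is meaningless, while the mutual information itself is perfectly finite and still needs bounding. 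You recognize this ("uniformity over the input law is the subtlest point") but your proposed fix --- that the posterior is "on average as spread as the prior" --- is essentially a restatement of the claim $I<\infty$ rather than a proof of it. The way out is to expand in the other direction, $I(\hat{\vect X};\vect Y\mid\abs{\vect x})=h(\vect Y\mid\abs{\vect x})-h(\vect Y\mid\hat{\vect X},\abs{\vect x})$: the first term is controlled by the maximum entropy theorem on the compact sphere (after reducing to $\hat{\vect Y}$), and the second is an average over input \emph{realizations} of a channel quantity, so it suffices to lower-bound $h_\sigma(\hat{\vect Y}\mid\hat{\vect X}=\hat{\vect x},\abs{\vect x})$ uniformly over deterministic $\hat{\vect x}$, with no reference to the input distribution. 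This is exactly what the paper does, and it then makes your "parameters versus equations" heuristic rigorous by writing $\vect W=U\hat{\vect Y}$ with $U\sim\mathcal U(0,1)$, converting spherical to Lebesgue entropy via \eqref{eq:sph-leb}, and applying the chain rule coordinate by coordinate so that each phase $\angle W_k$ contains a fresh finite-entropy variable $\Phi_k$ and each amplitude $\abs{W_k}$ retains at least one free random variable; your version stops at a dimension count plus an unproven claim that the relevant Jacobian is uniformly bounded below.

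A secondary issue: in \eqref{eq:absX-hatY} you condition on $\vect Z$ to "freeze" the additive noise, but in one MSSFM unit $\vect Z=\matr M\matr N^1\vect e+\matr D_2\matr N^2\vect e$ is built from the same Gaussian ensembles that generate the phases in $\matr M$, so conditioning on $\vect Z$ leaks information about $\matr M$ and you would still have to argue that enough phase randomness survives. The paper sidesteps this by invoking Lemma~\ref{lemm:decomposition} to restrict to $\abs{\vect x}$ large, where the additive noise is negligible relative to the signal and can simply be dropped, rather than conditioned on.
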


\begin{figure}
\centering
\includegraphics[width=0.2\textwidth]{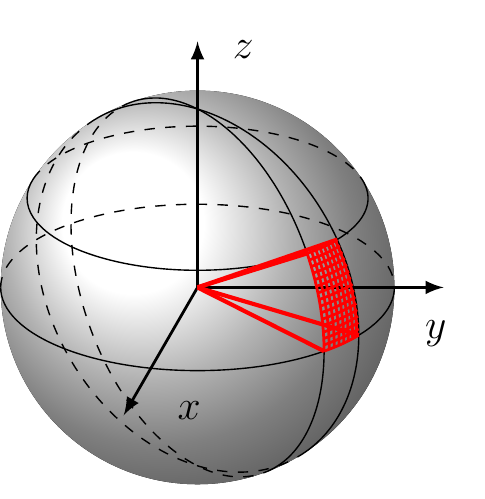}
\caption{The area on the surface of the unit sphere, representing $\vol(\hat Y)$, does not vanish as
$\abs{\underline{x}}\rightarrow\infty$.}
\label{fig:spherical-sector}
\end{figure}

\begin{proof}
Consider first the lossless model, where $\matr M$ is a unitary
operator. From Lemma~\ref{lemm:decomposition}, as $\abs{\vect
  x}\rightarrow\infty$, the additive noise in \eqref{eq:one-seg} can
be ignored. Thus $\vect Y=\abs{\vect Y}\hat{\vect Y}\approx\abs{\vect X}\hat{\vect Y}$. To prove \eqref{eq:hatX-Y},
\begin{IEEEeqnarray*}{rCl}
I(\hat{\vect X}; \vect Y\bigr|\abs{\vect x})&=&I(\hat{\vect
  X};\abs{\vect X}\hat{\vect Y}\bigr|\abs{\vect x})\\
&\overset{(a)}{=}&
I(\hat{\vect X};\hat{\vect Y}\bigr|\abs{\vect x})\\
  &=&h_{\sigma}(\hat{\vect Y}\bigr|\abs{\vect x})-h_{\sigma}(\hat{\vect Y}\bigr|\hat{\vect
  X}, \abs{\vect x}).
\end{IEEEeqnarray*}
Step $(a)$ follows from the identity
\begin{IEEEeqnarray}{rCl}
I(\vect X; Z\vect Y|Z)=I(\vect X; \vect Y|Z),\quad Z\neq 0.
\label{eq:I-indentity}
\end{IEEEeqnarray}

We measure the entropy of $\hat Y$ with respect to the spherical 
probability measure $\sigma^{m}$, $m=2n-1$, on the surface of the unit
sphere $S^{m}$.
From the maximum entropy theorem (MET) for distributions with compact support,  
\begin{IEEEeqnarray*}{rCl}
\frac{1}{n}h_{\sigma}(\hat{\vect Y}|\abs{\vect x})\leq \frac{1}{n}\log A_n,
\end{IEEEeqnarray*}
where $A_n=2\pi^n/\Gamma(n)$ is the surface area of $S^m$, in which
$\Gamma(n)$ is the gamma function.

We next show that the conditional entropy $h_{\sigma}(\hat{\vect
  Y}|\hat{\vect X}, \abs{\vect x})$ does not
tend to $-\infty$ with $\abs{\vect x}$.   The volume of the spherical 
sector in Fig.~\ref{fig:spherical-sector} vanishes if and
only if the corresponding area on the surface of the sphere vanishes.  
This can be formalized using identity \eqref{eq:sph-leb}. Let $\vect
W=U\hat{\vect{Y}}$, where $U\sim \mathcal U(0,1)$ independent of
$\vect X$ and $\vect Y$.  From \eqref{eq:sph-leb}
\begin{IEEEeqnarray}{rCl}
  h_{\sigma}(\hat{Y}|\hat{\vect X}, \abs{\vect x})=h(\vect
  W|\hat{\vect X}, \abs{\vect x})-h(U)-m\E\log U.
\label{eq:interm}
\end{IEEEeqnarray}

Applying chain rule to the differential entropy 
\begin{IEEEeqnarray}{rCl}
\IEEEyesnumber
h(\vect W|.)&=&\sum\limits_{k=1}^{n} h(W_k|W^{k-1},.)
\nn
\\
&=&\sum\limits_{k=1}^{n} h(\angle W_k, \bigl| W^{k-1},.)
\IEEEyessubnumber
\label{eq:chain-rule1}
\\
&&+\sum\limits_{k=1}^{n} h(|W_k| \bigl|W^{k-1}, \angle W_{k},.),
\IEEEyessubnumber
\label{eq:chain-rule2}
\end{IEEEeqnarray}
where entropy is conditioned on $\abs{\vect x}$ and $\hat{\vect X}$.

For the phase entropies in \eqref{eq:chain-rule1}, note that, from
\eqref{eq:one-seg}--\eqref{eq:Mkl}, 
$\angle W_k=\angle Y_k$ contains
random variable $\Phi_k$ with finite entropy, which does not appear 
in $W^{k-1}$. Formally, 
\[\angle W_k=\Phi_k+F(\Psi^n,\hat{\vect x}),
\]
for some function $F$, which can be determined from
\eqref{eq:one-seg}--\eqref{eq:Mkl}. Thus
\begin{IEEEeqnarray}{rCl}
h\left(\angle W_k\bigl| W^{k-1}, .\right)&=&
h\left(\Phi_k+F(\Psi^n,\hat{\vect x})\bigl|W^{k-1}, .\right)\nn\\
&\overset{(a)}{\geq} &h\left(\Phi_k+F(\Psi^n,\hat{\vect
    x}))\bigl|W^{k-1},\Phi^{k-1},\Psi^{n}, U, .\right)\nn\\
&\overset{(b)}{=}&h\left(\Phi_k+F(\Psi^n,\hat{\vect
    x}))\bigl|\Phi^{k-1},\Psi^{n}, U, .\right)\nn\\
&\overset{(c)}{=}&
h(\Phi_k\bigl|\Phi^{k-1},\Psi^{n},.)\nn\\
&>&
-\infty.
\label{eq:phase-entropies}
\end{IEEEeqnarray}
Step $(a)$ follows from the rule that conditioning reduces the entropy. Step $(b)$
holds because $W^{k-1}$ is a function of $\{\Phi^{k-1}, \Psi^n, U\}$. Step $(c)$ follows because $\{\Psi^n, .\}$ determines
$F(\Psi^n,\hat{\vect{x}})$.

For the amplitude entropies in \eqref{eq:chain-rule2}, we explain the
argument for $n=3$:
 \begin{IEEEeqnarray}{rCl}
W_k =Ue^{j\Phi_k}&&\Bigl(r_{k1}\hat
x_1e^{j\Psi_1}+r_{k2}\hat x_2e^{j\Psi_2}+r_{k3}\hat x_3e^{j\Psi_3}\Bigr),
\IEEEeqnarraynumspace
 \label{eq:Ys}
\end{IEEEeqnarray}
where $1\leq k\leq 3$. Noise addition in \eqref{eq:one-seg} implies 
$\pr(\hat X_k= 0)=0$, $\forall k$; we thus assume $\hat x_k\neq 0$ for
all $k$.  It is clear that $h(\abs{W_1})>-\infty$. 

There are 5 random variables $U$, $\Phi_1$, $\Psi_{1,2,3}$ for
two amplitude and phase relations in the $W_1$ equation in \eqref{eq:Ys}. Given $W_1$ and
$\angle W_2$, there are 6 random variables and three equations. One
could, for instance, express $\Psi_{1,2,3}$ in terms of $U$ and
$\Phi_{1,2}$. This leaves free at least $U$ in $|W_2|$, giving
\begin{IEEEeqnarray*}{rCl}
h(|W_2|\bigr| W_1, \angle W_2,.)&\geq &h(U)+c\\
&>&-\infty.
\end{IEEEeqnarray*}
The last equation for $W_3$ adds one random variable $\Phi_3$ and one equation
for $\angle W_3$. Together with the equation for $\abs{W_2}$, the
number of free random variables, defined as the number of all random variables minus the number of
equations, is 2; thus
\begin{IEEEeqnarray*}{rCl}
h(|W_3|\bigr| W_1, W_2, \angle W_3,.)>-\infty.
\end{IEEEeqnarray*}

In a similar way, in general, there are $n+k+1$ random variables in
$W^k$ and $2k-1$ equations in $(W^{k-1}, \angle W_k)$, resulting in
$n-k+2\geq 2$ free random variables. Thus
\begin{IEEEeqnarray}{rCl}
h(|W_k|\bigr| W^{k-1}, \angle W_{k})>-\infty,\quad 1\leq k\leq n.
\label{eq:amp-entropies}
\end{IEEEeqnarray}
Substituting \eqref{eq:phase-entropies} and \eqref{eq:amp-entropies} into \eqref{eq:chain-rule1}--\eqref{eq:chain-rule2}, 
we obtain $h(\vect W|.)>-\infty$. Finally, from \eqref{eq:interm}
\begin{IEEEeqnarray}{rCl}
  h_{\sigma}(\hat{Y}|\hat{\vect X}, \abs{\vect x})>-\infty.
\end{IEEEeqnarray}

The proof for lossy models, and \eqref{eq:absX-hatY}, is similar. Loss
changes matrix $\matd R$, which has no influence on our approach to
proving the boundedness of
terms in \eqref{eq:hatX-Y}--\eqref{eq:absX-hatY}.

\end{proof}

The essence of the above proof is that, as $\abs{\vect x}\rightarrow\infty$, the
additive noise in \eqref{eq:one-seg} gets smaller relative to the
signal, but phase noise (and thus randomness in $\matr M$) does not decrease
with $\abs{\vect x}$. Furthermore, $\matr M$ has enough randomness, owing
to the mixing effect of the dispersion, so that all $2n-1$ angles representing
signal direction in the spherical coordinate system are random
variables that do not vanish with $\abs{\vect x}$.

\begin{remark}
For some special cases of the dispersion-loss matrix $\matd R$, it is possible
to obtain deterministic components in $\hat{\vect Y}$ as  $\abs{\vect
  x}\rightarrow\infty$. These are cases where mixing does not fully
occur, \eg, $\matd R=I_n$. In the MSSFM, 
however, $\matd R$ is arbitrary, due to, \eg, step size $\epsilon$.
 \qed
\end{remark}

\subsection{Proof of the Corollary~\ref{cor:inf}}
\label{sec:cor}
We fix the power constraint and let $n\rightarrow\infty$ in the definition
of the capacity. The logarithmic terms depending on $\const P$ in the
Theorem~\ref{thm:main} approach zero, so that $\const C<\infty$.
 
Consider now the continuous-time model \eqref{eq:nls}. We discretize the channel in the frequency
domain, according to the approach in \cite{yousefi2015cwit2}. As the time
duration
$\const T\rightarrow\infty$ in \cite[Section~II]{yousefi2015cwit2}, we
obtain a discrete-time model with infinite number of \dofs\ (Fourier modes) in any frequency
interval at $z=0$. Therefore, $\const C<\infty$ in the corresponding 
discrete-time periodic model.

It is shown in \cite[Section~VIII]{yousefi2011opc} that, because of the spectral
broadening, the capacity of the continuous-time model $\const
C_c$ can be strictly lower than the capacity of the discrete-time model $\const C_d$.
Since $\const C_c\leq \const C_d$, and $\const C_d<\infty$, we obtain
$\const C_c<\infty$. 

We do not quantify constant $c'$ in the continuous-time model, which
can be much lower than the constant $c$ in the discrete-time model, due to spectral
broadening (potentially, $c'(\infty,\infty)=0$). A crude estimate, based on the Carson bandwidth rule, is given in 
\cite[Section~VIII]{yousefi2011opc} for the zero-dispersion
channel. 

To summarize, SE is bounded in input power in the continuous-time
model with $n=\infty$ (with or 
without filtering). The extent of the
data rate loss due to the spectral broadening ($c'$ versus $c$) remains an open problem.


\section{Random Matrix Model and the Asymptotic Capacity}
\label{sec:proof2}
In this section it is shown that, as $\abs{\vect X}\rightarrow\infty$,  the action of the discrete-time periodic stochastic NLS
equation  tends to multiplication by a random matrix (with fixed PDF, independent
of the input). Noise simplifies the NLS channel to a \emph{linear}
multiple-input 
multiple-output non-coherent fading channel. This section also
proves Theorems~\ref{thm:main} in an alternative intuitive way. 

The approach is based on the following steps. 

\emph{Step 1)} In Section \ref{sec:decomposition}, the input signal space is partitioned into a bounded region 
$\mathcal R^-$  
and its complement $\mathcal R^+$. It is shown that the overall rate is the interpolation of rates 
achievable using signals in 
$\mathcal R^{\pm}$. Lemma~\ref{lemm:I<infty} is proved, 
showing that the contribution of $\mathcal R^-$ to the mutual information is bounded. 
Suitable regions $\mathcal R^{\pm}$ are  chosen for the subsequent use.

\emph{Step 2)} In Section \ref{sec:fading-model}, it is shown that for
all $q(t,0)\in \mathcal R^+$, the nonlinear operator 
$L_N=j\gamma|Q|^2Q$ is multiplication by a uniform phase random variable, \ie, 
\begin{IEEEeqnarray*}{rCl}
L_N(Q)= j\Theta(t,z) Q,\quad \forall t, z,
\end{IEEEeqnarray*}
where\footnote{Derivatives do not exist with \iid\ phase random
  variables. However, with finite bandwidth, there is non-zero 
  correlation time.}$\Theta(t,z)\sim\iid\ \mathcal U(0,2\pi)$. 
In other words, for input signals in $\mathcal R^+$ the stochastic NLS equation is 
a simple linear channel with additive and multiplicative noise
\begin{IEEEeqnarray}{rCl}
\frac{\partial Q}{\partial z}=L_L(Q)+j \Theta(t,z)Q+N(t,z).
\label{eq:multiplicative}
\end{IEEEeqnarray}
Discretizing \eqref{eq:multiplicative}, we obtain that optical fiber 
is a fading channel when input is in $\mathcal R^+$:
\begin{IEEEeqnarray}{rCl}
\vect Y&=&\matr{M}\vect X+\vect Z,\quad \frac{1}{n}\E\norm{\vect X}^2\leq  \const P,
\label{eq:Y=HX+N}
\end{IEEEeqnarray}
in which $\matr M$ is a random matrix of the form 
\begin{IEEEeqnarray}{rCl}
\matr M = \prod\limits_{k=1}^m \matd R\matr D_k,\quad \matr D_k=\diag(e^{j\Theta_{ki}}),
\label{eq:M-expr}
\end{IEEEeqnarray}
where $\Theta_{kl}\sim\iid\ \mathcal U(0,2\pi)$ and $\vect Z$ is noise
\begin{IEEEeqnarray*}{rCl}
  \vect Z=\sum\limits_{k=1}^{m}\Bigl(\prod\limits_{l=1}^k\matd R\matr
  D_l\Bigr)\vect Z_k,\quad \vect Z_k\sim\iid\ \normalc{0}{\D I_n}.
\end{IEEEeqnarray*}
In general, $\matr M$ and $\vect Z$ are non-Gaussian. However, in the
constant loss model, $\vect Z\sim\normalc{0}{\matd K}$ where
$\matd K=(\sigma^2\const W\const L_e/n)I_n$, $\const
L_e=(1-e^{-\alpha\const L})/\alpha$. Note that $\matr M$ and $\vect Z$
have fixed PDFs, independent of $\vect X$.

Summarizing, the channel law is 
\begin{IEEEeqnarray}{rCl}
  p(\vect y|\vect x)=
\begin{cases}
\textnormal{given by the NLS equation}, &  \vect x\in\mathcal R^-,\\
p(\matr M\vect x+\vect Z|\vect x), & \vect x\in\mathcal R^+.
\end{cases}
\label{eq:law}
\end{IEEEeqnarray}

\emph{Step 3)} In Section~\ref{sec:asymptotic-capacity}, the capacity of the multiplicative-noise channel 
\eqref{eq:Y=HX+N} is studied. Lemma~\ref{lem:cap-Y=HX+N} and
\ref{lem:h(Mx)} are proved showing that,
for any $\matr M$ that does not have a deterministic component and is
finite (see \eqref{eq:h(M)>-infty}), the asymptotic capacity is 
given by the Theorem~\ref{thm:main}.
Importantly, the asymptotic rate is nearly independent of the
PDF of $\matr M$, which impacts only the bounded number
$c$. Finally, Lemma~\ref{lem:h(M-fiber)>-infty} is proved showing that the random matrix
underlying the optical fiber at high powers meets the assumptions of
the Lemma~\ref{lem:cap-Y=HX+N}.  An expression is provided for $c$,
which can be evaluated, depending on the PDF of $\matr M$.

\subsection{Step 1): Rate Interpolation}
\label{sec:decomposition}

We begin by proving the following lemma, which is similar to the proof approach 
in \cite{agrell2015conds}, where the notion of satellite 
constellation is introduced. 

\begin{lemma}
Let $p(\vect y|\vect x)$, $\vect x, \vect y \in\Reals^n$, be a conditional PDF. Define
\begin{IEEEeqnarray*}{rCl}
\vect X=\begin{cases}
\vect X_1, & \textnormal{with probability }\lambda,\\
\vect X_2, & \textnormal{with probability }1-\lambda,
\end{cases}
\end{IEEEeqnarray*}
where $\vect X_{1}$ and  $\vect X_2$ are random variables in $\Reals^n$ and $0\leq\lambda\leq 1$. Then
\begin{IEEEeqnarray}{rCl}
\lambda R_1+(1-\lambda)R_2\leq R\leq \lambda R_1+(1-\lambda)R_2+H(\lambda),
\label{eq:R1-R2-H}
\end{IEEEeqnarray}
where $R_1$, $R_2$ and $R$ are, respectively, mutual information of $X_1$, $X_2$ and $X$, and $H(x)=-x\log x-(1-x)\log(1-x)$ is the
binary entropy function, $0\leq x\leq 1$. 
\label{lemm:decomposition}
\end{lemma}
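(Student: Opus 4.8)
The plan is to realize the mixture input via an auxiliary binary ``mode selector'' $S\in\{1,2\}$ with $\pr(S=1)=\lambda$, $\pr(S=2)=1-\lambda$, and to set $\vect X=\vect X_S$, so drawing $\vect X$ means first drawing $S$ and then the corresponding component. Passing $\vect X$ through the fixed channel $p(\vect y|\vect x)$ yields the output $\vect Y$. The key structural observation is that the channel law does not depend on $S$, so $S\rightarrow\vect X\rightarrow\vect Y$ is a Markov chain and hence $I(S;\vect Y|\vect X)=0$. This is the only place the hypothesis ``$p(\vect y|\vect x)$ is the same for both components'' is used.

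Next I would expand the joint mutual information $I(\vect X,S;\vect Y)$ by the chain rule in the two possible orders. On one hand $I(\vect X,S;\vect Y)=I(\vect X;\vect Y)+I(S;\vect Y|\vect X)=I(\vect X;\vect Y)=R$. On the other hand $I(\vect X,S;\vect Y)=I(S;\vect Y)+I(\vect X;\vect Y|S)$, and here the conditional term splits, by definition of $S$, into $I(\vect X;\vect Y|S)=\lambda\,I(\vect X;\vect Y|S=1)+(1-\lambda)\,I(\vect X;\vect Y|S=2)=\lambda R_1+(1-\lambda)R_2$, since conditioned on $S=i$ the channel input is exactly $\vect X_i$ and the output is the corresponding channel output. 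Combining the two expansions gives the exact identity $R=I(S;\vect Y)+\lambda R_1+(1-\lambda)R_2$.

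The inequality \eqref{eq:R1-R2-H} then follows from the elementary two-sided bound $0\le I(S;\vect Y)\le H(S)$: nonnegativity of mutual information gives the left inequality, and $I(S;\vect Y)\le H(S)=H(\lambda)$ (the entropy of a binary variable with parameter $\lambda$) gives the right inequality.

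I do not expect a real obstacle; the argument is a short application of the chain rule and the bound $I(S;\vect Y)\le H(S)$. The only points needing a word of care are (i) to state explicitly what ``mutual information of $\vect X$'' means here, namely $I(\vect X;\vect Y)$ for $\vect Y$ the output of the given channel, and to read the inequalities in $[0,\infty]$ so they remain valid when some component rate is infinite (in which case $R=\infty$ too), and (ii) to record the Markov property $S\rightarrow\vect X\rightarrow\vect Y$ at the outset, which is exactly what makes $I(S;\vect Y|\vect X)=0$ and hence collapses the first chain-rule expansion to $R$.
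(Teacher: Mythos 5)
Your proof is correct. It takes a mildly but genuinely different route from the paper's: you introduce an explicit time-sharing variable $S$ and expand $I(\vect X,S;\vect Y)$ by the chain rule in both orders, using the Markov chain $S\rightarrow\vect X\rightarrow\vect Y$ to collapse one expansion to $R$ and obtaining the exact identity $R=\lambda R_1+(1-\lambda)R_2+I(S;\vect Y)$, after which $0\leq I(S;\vect Y)\leq H(S)=H(\lambda)$ finishes the job. The paper instead works directly with the mixture densities: it writes $p_{\vect Y}=\lambda p_{\vect Y_1}+(1-\lambda)p_{\vect Y_2}$, shows by ``elementary algebra'' that the gap term is $\Delta I=\lambda D(p_{\vect Y_1}\|p_{\vect Y})+(1-\lambda)D(p_{\vect Y_2}\|p_{\vect Y})$, and bounds it above by $H(\lambda)$ via the pointwise inequality $p_{\vect Y}(\vect y)\geq\max\{\lambda p_{\vect Y_1}(\vect y),(1-\lambda)p_{\vect Y_2}(\vect y)\}$ and below by $0$ via nonnegativity of divergence. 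The two gap terms are the same quantity ($\Delta I$ is exactly $I(S;\vect Y)$ written as a mixture of divergences from the mixture output), so the proofs are equivalent in substance; what your packaging buys is that the upper bound $H(\lambda)$ comes for free from $I(S;\vect Y)\leq H(S)$ with no density manipulation, while the paper's version makes explicit which output distributions are being compared, which is the form it reuses when reasoning about the regions $\mathcal R^{\pm}$. Your cautionary remarks (reading the inequalities in $[0,\infty]$ and recording the Markov property up front) are both apt.
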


\begin{proof}
The PDF of the time sharing random variable $\vect X$ and its output $\vect Y$ are
\begin{IEEEeqnarray}{rCl}
p_{\vect X}(\vect x)&=&\lambda p_{\vect X_1}(\vect x)+(1-\lambda)p_{\vect X_2}(\vect x),\nn
\\
p_{\vect Y}(\vect y)&=&\lambda p_{\vect Y_1}(\vect y)+(1-\lambda)p_{\vect Y_2}(\vect y),
\label{eq:py=py1+py2}
\end{IEEEeqnarray}
where
\begin{IEEEeqnarray*}{rCl}
p_{\vect Y_1,\vect Y_2}(\vect y)=\int p(\vect y|\vect x)p_{\vect X_1,\vect X_2}(\vect x)\der \vect x.
\end{IEEEeqnarray*}
By elementary algebra
\begin{IEEEeqnarray*}{rCl}
I(\vect X; \vect Y)
= \lambda I(\vect X_1;\vect Y_1)+(1-\lambda) I(\vect X_2,\vect Y_2)+\Delta I,
\end{IEEEeqnarray*}
where
\begin{IEEEeqnarray*}{rCl}
\Delta I &=& 
\lambda D( p_{\vect Y_1}(\vect y_1)||p_{\vect Y}(\vect y))
+(1-\lambda) D( p_{\vect Y_2}(\vect y_2)||p_{\vect Y}(\vect y)).
\end{IEEEeqnarray*}

From \eqref{eq:py=py1+py2}, 
\[
p_{\vect Y}(\vect y)\geq \max\left\{\lambda p_{\vect Y_1}(\vect y), (1-\lambda) p_{\vect Y_2}(\vect 
y)\right\},
\]
which gives $\Delta I\leq H(\lambda)$. 
From $\log x\leq x-1$, $D(p(\vect y_{1,2})||p(\vect y))\geq 0$, giving
$\Delta I\geq 0$.
Thus
\begin{IEEEeqnarray*}{rCl}
I (\vect X;\vect Y) &\leq & \lambda I (\vect X_1;\vect Y_1)+(1-\lambda) I(\vect X_2;\vect Y_2)+H(\lambda),\\
 I(\vect X;\vect Y) &\geq& \lambda I (\vect X_1;\vect Y_1)+(1-\lambda) I(\vect X_2;\vect Y_2).
\end{IEEEeqnarray*}

\end{proof}

\begin{corollary}
Define
\begin{IEEEeqnarray*}{rCl}
\bar R=\lim\limits_{n\rightarrow\infty}\frac{1}{n}I(\vect X;\vect Y).
\end{IEEEeqnarray*}
With definitions in the Lemma~\ref{lemm:decomposition}, we have $\bar
R=\lambda\bar R_1+(1-\lambda)\bar R_2$.
\label{cor:rate-interpolation}
\end{corollary}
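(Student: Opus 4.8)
The plan is simply to divide the two-sided bound of Lemma~\ref{lemm:decomposition} by $n$ and pass to the limit. For every dimension $n$, Lemma~\ref{lemm:decomposition} gives
\begin{IEEEeqnarray*}{rCl}
\lambda R_1+(1-\lambda)R_2\leq R\leq \lambda R_1+(1-\lambda)R_2+H(\lambda),
\end{IEEEeqnarray*}
where $R=I(\vect X;\vect Y)$, $R_1=I(\vect X_1;\vect Y_1)$ and $R_2=I(\vect X_2;\vect Y_2)$ all depend on $n$, while $\lambda\in[0,1]$ and hence $H(\lambda)\leq\log 2$ do not. Multiplying through by $1/n$,
\begin{IEEEeqnarray*}{rCl}
\lambda\frac{R_1}{n}+(1-\lambda)\frac{R_2}{n}\leq \frac{R}{n}\leq \lambda\frac{R_1}{n}+(1-\lambda)\frac{R_2}{n}+\frac{H(\lambda)}{n}.
\end{IEEEeqnarray*}

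First I would invoke the hypothesis implicit in the notation $\bar R_1,\bar R_2$, namely that the normalized informations $R_i/n$ converge as $n\rightarrow\infty$; then the left-hand side tends to $\lambda\bar R_1+(1-\lambda)\bar R_2$. On the right-hand side the extra term obeys $0\leq H(\lambda)/n\leq (\log 2)/n\rightarrow 0$, so the right-hand side tends to the same limit. By the squeeze theorem $R/n$ converges and $\bar R=\lambda\bar R_1+(1-\lambda)\bar R_2$, which is the claim.

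There is essentially no obstacle here: the entire content sits in Lemma~\ref{lemm:decomposition}, and the corollary is just its per-degree-of-freedom shadow in the large-$n$ regime, where the fixed overhead $H(\lambda)$ is amortized to zero. The only point worth a remark is robustness if one later lets $\lambda=\lambda(n)$ vary with $n$; this causes no trouble for the vanishing of the overhead because $H$ is bounded by $\log 2$ uniformly on $[0,1]$, so $H(\lambda(n))/n\rightarrow 0$ regardless, although in that generality one must also ensure the limits $\bar R_i$ are interpreted along the appropriate subsequences. For the stated corollary, with $\lambda$ fixed, the three-line computation above suffices.
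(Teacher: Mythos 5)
Your proof is correct and matches the paper's (implicit) argument: the corollary is stated as an immediate consequence of Lemma~\ref{lemm:decomposition}, obtained exactly as you do by normalizing the two-sided bound by $n$ and noting that the overhead $H(\lambda)\leq\log 2$ is amortized to zero. Nothing further is needed.
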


For the rest of the paper, we choose $\mathcal R^-$ to be an $n$-hypercube in $\Complex^n$
\begin{IEEEeqnarray*}{rCl}
\mathcal R^-_{\kappa}=\bigl\{\vect x\in\Complex^n~\Bigl|~|x_k|<
\kappa,\quad 1\leq k\leq n
\bigr\},
\end{IEEEeqnarray*}
and $\mathcal R^+_{\kappa}=\Complex^n\backslash\mathcal
R^-_{\kappa}$. We drop the subscript $\kappa$ when we do not need it. The following Lemma shows
that, if $\kappa<\infty$, the contribution of the signals in $\mathcal R^-_\kappa$
to the mutual information in the NLS channel is bounded.

\begin{lemma}
Let $\vect X\in\Complex^n$ be a random variable supported on $\mathcal
R^-_{\kappa}$ and $\kappa<\infty$. For the NLS channel \eqref{eq:nls}
\begin{IEEEeqnarray*}{rCl}
\frac{1}{n}I(\vect X;\vect Y)<\infty.
\end{IEEEeqnarray*}
\label{lemm:I<infty}
\end{lemma}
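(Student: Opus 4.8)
The plan is to bound the mutual information $I(\vect X;\vect Y)$ for inputs confined to the bounded hypercube $\mathcal R^-_\kappa$ by exploiting the fact that on a compact input set all the relevant quantities — signal power, noise statistics, and the nonlinear phase — stay under control. First I would write $\frac{1}{n}I(\vect X;\vect Y)=\frac{1}{n}\bigl(h(\vect Y)-h(\vect Y|\vect X)\bigr)$ and treat the two terms separately. For the output entropy $h(\vect Y)$: since $\vect X$ is supported on $\mathcal R^-_\kappa$, we have $\norm{\vect X}^2\le n\kappa^2$ deterministically; tracking the input–output relation \eqref{eq:m-seg}–\eqref{eq:M-m-seg} through the MSSFM, the matrices $\matd R\matr M_k$ have bounded operator norm (unitary up to the deterministic loss factor $e^{-\alpha_r\const L/2}\le 1$) and the additive noise $\vect Z(k)$ has finite second moment (it is a finite sum of terms each with bounded-norm prefactors acting on Gaussian vectors $\vect Z_k\sim\normalc{0}{\D I_n}$). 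Hence $\E\norm{\vect Y}^2\le c_1 n$ for a constant $c_1$ depending only on $\kappa$, $\D$, $\const L$, and the loss, and the maximum-entropy theorem for a second-moment constraint gives $\frac{1}{n}h(\vect Y)\le \log(\pi e c_1)<\infty$.

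For the conditional term I would show $\frac{1}{n}h(\vect Y|\vect X)>-\infty$, uniformly over the (compactly supported) choice of input distribution. The key observation is that for fixed $\vect x\in\mathcal R^-_\kappa$ the channel output is $\vect Y=\matr M\vect x+\vect Z$, and the nonlinear phases entering $\matr M$ and $\vect Z$ are built from Gaussian noise increments with fixed, nondegenerate covariance $\D/L$ per sub-step — quantities that do not shrink or blow up as a function of $\vect x$ over the bounded set. Concretely, one can lower-bound $h(\vect Y|\vect X=\vect x)$ by exhibiting a nontrivial component of the additive noise: from \eqref{eq:M-m-seg}, $\vect Z(k)$ contains the term $\matd R\vect Z_{\bar m}$ with $\vect Z_{\bar m}\sim\normalc{0}{m\D I_n}$ independent of everything that multiplies the earlier noise terms, and $\matd R$ is invertible; conditioning on all other noise variables and using "conditioning reduces entropy" together with the entropy-preservation of invertible linear maps, $h(\vect Y|\vect X=\vect x)\ge h(\matd R\vect Z_{\bar m})+\text{(bounded)} = n\log(\pi e m\D)+\text{(bounded)}>-\infty$, with the bounds uniform in $\vect x$ because $\vect x$ ranges over a compact set. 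Averaging over $\vect X$ preserves the bound.

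Combining, $\frac{1}{n}I(\vect X;\vect Y)=\frac{1}{n}h(\vect Y)-\frac{1}{n}h(\vect Y|\vect X)\le \log(\pi e c_1)-\bigl(\log(\pi e m\D)+c_2\bigr)<\infty$, which is the claim. Two technical points deserve care. The minor one is that the additive noise $\vect Z$ in the MSSFM is non-Gaussian (it is noise multiplied by signal-dependent phases), so I cannot directly quote a Gaussian entropy formula; but I only ever need an upper bound on $\E\norm{\vect Z}^2$ and a lower bound on $h(\vect Z|\cdots)$ via the one explicitly Gaussian summand, both of which survive. The main obstacle — and the place I would spend the most effort — is making the lower bound on $h(\vect Y|\vect X=\vect x)$ genuinely uniform over $\vect x\in\mathcal R^-_\kappa$: one must verify that the "bounded" correction terms (the Jacobian-type and cross terms coming from the phase factors $e^{j\Phi_k+j\Psi_l}$ in \eqref{eq:Mkl}) are bounded by a constant depending only on $\kappa$ and not on the particular $\vect x$ or on the input distribution, which is exactly where compactness of $\mathcal R^-_\kappa$ (i.e.\ $\kappa<\infty$) is essential — the statement is false if $\kappa=\infty$, consistent with the $\mathcal R^+$ analysis in the rest of the paper.
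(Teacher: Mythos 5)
Your proposal is correct and follows essentially the same route as the paper: an upper bound on $h(\vect Y)$ from the maximum-entropy theorem (the paper uses the compact support / bounded power of the input, you use the induced second-moment bound on $\vect Y$, which is the cleaner variant since $\vect Y$ itself is not compactly supported), combined with a lower bound on $h(\vect Y|\vect X)$ obtained by isolating the final independent noise addition, exactly as in the paper's Markov-chain argument $h(\vect Y|\vect X)\geq h(\vect N)>-\infty$. The only difference is that the paper also records a one-line alternative via the $\log(1+\snr)$-type upper bound of \cite{yousefi2015cwit2}, and note that your conditional-entropy bound does not actually require compactness of $\mathcal R^-_\kappa$ (only the output-entropy bound does).
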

\begin{proof}
From the MET, $h(\vect Y)\leq \log\left|\mathcal R^-_\kappa\right|< \infty$. 
Let $\vect X\rightarrow \vect Z\rightarrow \vect Y=\vect Z+\vect N$ be a Markov chain, 
where $\vect N$ is independent of $\vect Z$ and $h(\vect N)>-\infty$. Then 
$h(\vect Y|\vect X)\geq h(\vect Y|\vect X,\vect Z)=h(\vect Y|\vect Z)=h(\vect N)>-\infty$. Applying this to the NLS 
channel with an independent noise addition in the last
stage, we obtain $I(\vect X, \vect Y)<\infty$. 

Alternatively, from \cite{yousefi2015cwit2}, 
\[
\frac{1}{n}I(\vect X; \vect Y)\leq \log(1+\frac{|\mathcal R^-_\kappa|^2}{nm\D })<\infty.
\]
\label{lem:R-}
\end{proof}

\subsection{Step 2): Channel Model in the High Power Regime}
\label{sec:fading-model}

We begin with the zero-dispersion channel. Let $Q(t,0)=X=R_x\exp(j\Phi_x)$ and $Q(t,z)=Y=R_y\exp(j\Phi_y)$ be, respectively, 
channel input and output in \eqref{eq:zd}. For a fixed $t$, $X$ and
$Y$ are complex numbers. 

\begin{lemma}
We have 
\begin{IEEEeqnarray*}{rCl}
\lim\limits_{\abs{x}\rightarrow\infty}p(\phi_y|x)&=&\lim\limits_{\abs{x}\rightarrow\infty}p(\phi_y|x,r_y)\\
&=&\frac{1}{2\pi}.
\end{IEEEeqnarray*}
Thus, the law of the zero-dispersion channel tends to the law of the following channel
\[
Y=Xe^{j\Theta}+Z,
\]
where $\Theta\sim\mathcal U(0,2\pi)$, $Z\sim \normalc{0}{\D}$, and $(X,Z,\Theta)$ are independent.
\label{lemm:uniform}
\qed
\end{lemma}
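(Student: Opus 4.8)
The plan is to study the phase $\Phi_y$ of the output $Y = (X + W(t,z))e^{j\Theta(t,z)}$ in the zero-dispersion step \eqref{eq:zd}, and to show that its conditional density converges pointwise to the uniform density $1/(2\pi)$ as $\abs{x}\rightarrow\infty$. The output phase is $\Phi_y = \Phi_x + \angle(X+W)/\abs{X} \cdot \abs{X} + \Theta(t,z) \pmod{2\pi}$; more precisely $\Phi_y = \angle(X + W(t,z)) + \Theta(t,z)$, where the accumulated nonlinear phase is $\Theta(t,z) = \gamma\int_0^z |X + W(t,l)|^2\,\der l$. The key observation is that $\Theta(t,z)$, conditioned on $x$ (and even on $r_y = \abs{Y}$), contains a term $\gamma z\abs{X}^2$ plus cross terms of order $\abs{X}$, so its conditional variance grows like $\abs{X}^2$, hence without bound. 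First I would isolate the dominant random contribution: write $\Theta(t,z) = \gamma z\abs{x}^2 + 2\gamma\abs{x}\,\Re\!\bigl(e^{-j\phi_x}\!\int_0^z W(t,l)\,\der l\bigr) + \gamma\int_0^z|W(t,l)|^2\,\der l$. The middle term is a zero-mean Gaussian (conditionally on $x$) with variance proportional to $\abs{x}^2$, and is independent of — or only weakly coupled to — the additive perturbation $W(t,z)$ that sets $r_y$.

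The main technical step is a ``smoothing modulo $2\pi$'' argument: if $\Theta$ is a real random variable whose conditional density is continuous and whose conditional variance $\to\infty$, then $\Theta \bmod 2\pi$ converges in distribution to the uniform law on $[0,2\pi)$, and in fact the density converges pointwise and uniformly. I would make this quantitative via the Fourier/characteristic-function route: the Fourier coefficients of the wrapped density are $\E[e^{jk\Theta}\mid x]$ for $k\in\mathbb{Z}\setminus\{0\}$, and for a Gaussian-dominated $\Theta$ with variance $v(x)\to\infty$ one has $|\E[e^{jk\Theta}\mid x]| \leq e^{-k^2 v(x)/2}\to 0$ for every fixed $k\neq 0$ (with the non-Gaussian remainder $\gamma\int_0^z|W|^2\der l$ bounded in $\abs{x}$, so it contributes only a bounded multiplicative factor to each coefficient). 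Summability of the coefficients then upgrades pointwise convergence of the Fourier series to uniform convergence of the density to the constant $1/(2\pi)$. Conditioning additionally on $r_y$ changes the conditional law of the Gaussian cross-term only through a bounded shift in mean and a bounded reduction in variance (it cannot kill the $\Theta(t,l)$ integration over the open interval $0<l<z$), so the same estimate delivers $p(\phi_y\mid x, r_y)\to 1/(2\pi)$.

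The hard part will be handling the statistical dependence between the nonlinear phase $\Theta(t,z)$ and the amplitude $r_y = |X + W(t,z)|$ carefully: one must verify that conditioning on $r_y$ does not collapse the randomness of $\Theta$, which requires that the Wiener increments $W(t,l)$ for $l$ in the interior of $[0,z]$ retain conditional variance of order one even given the endpoint $W(t,z)$ — a Brownian-bridge fact — so the cross term retains conditional variance growing like $\abs{x}^2$. A clean way to organize this is to condition on the entire path $\{W(t,l)\}_{0\le l\le z}$, note that then $\Theta$ and $r_y$ are both fixed, and instead use the extra uniform variable already present, namely to treat $\angle(X+W)$ and $\Theta$ jointly; but the simplest route is the Brownian-bridge decomposition. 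Once the wrapped-Gaussian lemma and the bridge variance bound are in place, the final statement — that the channel law \eqref{eq:zd} converges to the law of $Y = Xe^{j\Theta} + Z$ with $\Theta\sim\mathcal U(0,2\pi)$ independent of $(X,Z)$ — follows because $Z := W(t,z) e^{j\Theta}$ has the same Gaussian law $\normalc{0}{\D}$ for each fixed phase and the residual coupling between $Z$ and $\Theta$ vanishes in the limit, matching exactly the statement in \cite[Section~IV]{yousefi2011opc}.
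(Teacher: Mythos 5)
Your route is genuinely different from the paper's. The paper never analyzes the stochastic representation of $\Theta(t,z)$ directly: it starts from the known closed-form joint conditional PDF $p(r_y,\phi_y|r_x,\phi_x)$ of the zero-dispersion channel (a Fourier series in $\phi_y-\phi_x-\gamma r_x^2 z$ with Bessel-function coefficients $p_m(r_y|r_x)$), forms the ratio $D_m=p_m/p_0$, and kills every harmonic $m\neq 0$ as $r_x\to\infty$ using three elementary inequalities on $|b_m/b_0|$, $|I_m/I_0|$ and $\Re(x_m\coth x_m-1)>0$, together with dominated convergence of the series. Your wrapped-phase/characteristic-function argument is the probabilistic counterpart: it is more self-contained (it does not presuppose the Bessel-series PDF from \cite{mecozzi1994llh,yousefi2011opc}), and your Brownian-bridge observation is exactly the right way to see why conditioning on $r_y$ cannot collapse the phase randomness --- something the paper obtains for free by working with the joint law.

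There is, however, one step that fails as written: the claim that the non-Gaussian remainder $\gamma\int_0^z|W(t,l)|^2\,\der l$ ``contributes only a bounded multiplicative factor to each coefficient.'' The coefficient you need is $\E\bigl[e^{jk\Theta}\mid x, W(t,z)\bigr]$, where the linear term $2\gamma\abs{x}\,\Re\bigl(e^{-j\phi_x}\int_0^z W\,\der l\bigr)$ and the quadratic term are driven by the \emph{same} Wiener path; the characteristic function of a sum of dependent variables does not factor, and a bounded-but-dependent quadratic part can in principle cancel the oscillation of the linear part, so $|\E e^{jk\Theta}|$ is not controlled by $e^{-k^2v(x)/2}$ times a constant. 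The repair is to condition one level deeper: split the bridge into a single real Gaussian mode $\xi$ (chosen so that its time integral is nonzero) plus an independent remainder, condition on that remainder and on $W(t,z)$, and note that $\Theta$ is then an explicit linear-plus-quadratic form $a\xi+b\xi^2+\text{const}$ with $|a|\propto\abs{x}$ and $b$ bounded. Its characteristic function is known in closed form and has modulus $(1+4b^2\sigma^4)^{-1/4}\exp\bigl(-a^2\sigma^2/(2(1+4b^2\sigma^4))\bigr)\rightarrow 0$ as $\abs{x}\rightarrow\infty$, uniformly in the conditioning. With that lemma in place, the rest of your argument (summability in $k$, uniform convergence of the wrapped density to $1/(2\pi)$, and the limiting independence of $Z$ and $\Theta$) goes through.
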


\begin{proof}
The condition PDF is \cite[Eq. 18]{yousefi2011opc}
\begin{IEEEeqnarray*}{rCl}
p(r_y,\phi_y|r_x,\phi_x)&=&\frac{1}{2\pi}p_0(r_y|r_x)\\
&&+\:\frac{1}{2\pi}\sum\limits_{m=1}^\infty\Re
\left(p_m(r_y|r_x)e^{jm(\phi_y-\phi_x-\gamma r^2_xz)}\right).
\end{IEEEeqnarray*}
Here 
\begin{IEEEeqnarray*}{rCl}
p_m(r_y|r_x)&=&2r_xb_m\exp\left(-a_m(r_x^2+r_y^2)\right)I_m(2b_mr_xr_y),
\end{IEEEeqnarray*}
where
\begin{IEEEeqnarray*}{rCl}
a_m=\frac{1}{\mathcal D z}x_m\coth(x_m),\quad b_m=\frac{1}{\mathcal Dz}\frac{x_m}{\sinh(x_m)},
\end{IEEEeqnarray*}
in which $x_m=\sqrt{jm\gamma\mathcal D}z=t_m(1+j)$, $t_m=\sqrt{\frac{1}{2}m\gamma\mathcal D}z$.
Note that $p(r_y|r_x)=p_0(r_y|r_x)$. 

The conditional PDF of the phase is
\begin{IEEEeqnarray*}{rCl}
p(\phi_y|r_x,\phi_x, r_y)&=&\frac{p(\phi_y, r_y|r_x,\phi_x)}{p(r_y|r_x,\phi_x)}
\\
&\overset{(a)}{=}&\frac{p(\phi_y, r_y|r_x,\phi_x)}{p(r_y|r_x)}
\\
&=&\frac{1}{2\pi} 
\sum\limits_{m=1}^\infty\Re\left( D_m(r_x)  e^{jm(\phi_y-\phi_x-\gamma
    r_x^2z)}\right)\\
&& +\frac{1}{2\pi},\nn
\end{IEEEeqnarray*}
where step $(a)$ follows from $p(r_y|r_x,\phi_x)=p(r_y|r_x)$ (see \cite[Fig.~6 (b)]{yousefi2011opc}) and
\begin{IEEEeqnarray}{rCl}
D_m(r_x)&=&\frac{p_m(r_y|r_x)}{p_0(r_y|r_x)}\nn\\
&=&\frac{b_m}{b_0}\frac{I_m(2b_mr_xr_y)}{I_0(2b_0r_xr_y)}\nn\\
&&\times
\exp\Bigl\{-b_0\left(x_m\coth x_m-1\right)(r_x^2+r_y^2)\Bigr\}.
\IEEEeqnarraynumspace 
\label{eq:Dm}
\end{IEEEeqnarray}

The following three inequalities can be verified:
\begin{IEEEeqnarray}{rCl}
\IEEEyesnumber
\left|\frac{b_m}{b_0}\right|^2&=&
\frac{4t^2}{\left|\cosh 2t-\cos 2t\right|}
\nn
\\
&\leq& 1,\quad t>0.
\IEEEyessubnumber
\label{eq:inq1}
\end{IEEEeqnarray}
\begin{IEEEeqnarray}{rCl}
\left|\frac{I_m(2r_xr_yb_m)}{I_0(2r_xr_y b_0)}\right|&\leq& 
\left|\frac{I_m(2r_xr_yb_0)}{I_0(2r_xr_y b_0)}\right|\nn\\ 
&\leq&1.
\IEEEyessubnumber
\label{eq:inq2}
\end{IEEEeqnarray}
\begin{IEEEeqnarray*}{rCl}
F(t)&=&\Re(x_m\coth x_m-1)\nn\\
&=&t\frac{\sinh(2t)+\sin(2t)}{\cosh(2t)- \cos(2t)}-1
\nn\\
&>&0,
\IEEEyessubnumber
\label{eq:inq3}
\end{IEEEeqnarray*}
where $t\eqdef t_m>0$. 

Using \eqref{eq:inq1}--\eqref{eq:inq3} in \eqref{eq:Dm}, we obtain
$|D_m(r_x)|\leq E_m(r_x)$, where
\begin{IEEEeqnarray}{rCl}
E_m(r_x)=\exp\left\{-\frac{1}{\mathcal
    Dz}F(t_m)(r_x^2+r_y^2)\right\}.
\label{eq:D<1}
\end{IEEEeqnarray}
We have
\begin{IEEEeqnarray*}{rCl}
\lim\limits_{r_x\rightarrow\infty}\left|\sum\limits_{m=1}^\infty D_m(r_x)  e^{jm(\phi_y-\phi_x-\gamma r_x^2z)}\right|
&\leq& 
\lim\limits_{r_x\rightarrow\infty}
\sum\limits_{m=1}^\infty
\left|D_m(r_x)\right| 
\\
&\leq&
\lim\limits_{r_x\rightarrow\infty}
\sum\limits_{m=1}^\infty
E_m(r_x) 
\\
&\overset{(a)}{=}&
\sum\limits_{m=1}^\infty
\lim\limits_{r_x\rightarrow\infty}
E_m(r_x)\\
&\overset{(b)}{=}& 0.
\end{IEEEeqnarray*}
Step $(a)$ follows because $E_m(r_x)\leq E_m(0)$ and
$\sum E_m(0)$ is convergent; thus, by the dominated convergence theorem, $\sum E_m(r_x)$ is
uniformly convergent. Step $(b)$ follows from \eqref{eq:D<1}.

It follows that
\begin{IEEEeqnarray*}{rCl}
\lim\limits_{r_x\rightarrow\infty}p(\phi_y|r_x, \phi_x, r_y)=\frac{1}{2\pi}. 
\end{IEEEeqnarray*}
Furthermore 
\begin{IEEEeqnarray*}{rCl}
\lim\limits_{r_x\rightarrow\infty}p(\phi_y|r_x,\phi_x)&=&
\lim\limits_{r_x\rightarrow\infty}\int
 p(\phi_y|r_x,\phi_x, r_{y'})p(r_{y'}|r_x,\phi)\der r_{y'}\\
&=&\frac{1}{2\pi}. 
\end{IEEEeqnarray*}

\end{proof}

Lemma~\ref{lemm:uniform} generalizes to the vectorial zero-dispersion channel \eqref{eq:zd}. 
Since noise is independent and identically distributed in space and time,
so are the corresponding uniform phases. This is true even if $X_i$ in Fig.~\ref{fig:mssfm}
are dependent, \eg, $\vect X=(x,\cdots, x)$.

We now consider the dispersive model. To generalize
Lemma~\ref{lemm:uniform} to the full model, we use the following
notion \cite[Section~2.6]{moser2004dbb}.

\begin{definition}[Distributions that Escape to Infinity]
A family of PDFs $\{p_{\vect X_{\theta}}(\vect
x)\}_{\theta}$, $0\leq\theta\leq\theta_0$, is said to \emph{escape to infinity}
with $\theta$ if $\lim\limits_{\theta\rightarrow\theta_0}\pr(|\vect X_{\theta}|<c)=0$ for
any finite $c$.
\qed
\end{definition}

\begin{lemma}
Let $\vect X\in\mathcal R^+_{\kappa}$ and $\vect Y$ be, respectively,
the channel input and output in the dispersive model. The PDF of
$Y_k$ escapes to infinity as $\kappa\rightarrow\infty$ for all $k$.
\label{lemm:scape}
\end{lemma}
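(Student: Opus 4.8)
The plan is to show that each output component $Y_k$ cannot concentrate on any bounded set as $\kappa\to\infty$, i.e.\ that $\pr(|Y_k|<c)\to 0$ for every finite $c$. The key structural fact, already available, is the input--output relation \eqref{eq:m-seg}--\eqref{eq:M-m-seg}, $\vect Y=\matr M(k)\vect X+\vect Z(k)$, together with the multiplicative structure \eqref{eq:M-expr}: $\matr M=\prod_{k=1}^m \matd R\matr D_k$ with $\matr D_k$ diagonal unitary. Since each $\matr D_k$ is unitary and $\matd R = e^{-\frac12\alpha_r\epsilon}\matd U$ (or unitary in the lossless case), $\matr M$ has a deterministic, nonzero singular-value structure: in the constant-loss case $|\matr M|$ is deterministic; in general $\matr M\hat{\vect x}$ has norm bounded below by $\sigma_{\min}(\matr M)|\vect x|$ where $\sigma_{\min}$ is bounded away from $0$ uniformly (in the lossless case $|\matr M\vect x|=|\vect x|$ exactly, cf.\ \eqref{eq:chi-squared}). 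Hence $|\matr M\vect X|\ge \sigma_{\min}|\vect X|$ deterministically, so $|\matr M\vect X|$ itself escapes to infinity when $\vect X\in\mathcal R^+_\kappa$ and $\kappa\to\infty$.

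Next I would pass from the vector norm to an individual coordinate $Y_k=(\matr M\vect X)_k+Z_k$. Write $(\matr M\vect X)_k=\sum_l M_{kl}X_l$. The obstacle is that a single coordinate of a vector with large norm need not itself be large — cancellation could in principle keep $(\matr M\vect X)_k$ small for unfortunate realizations of the random phases in $\matr M$. This is where the randomness and full-dispersiveness (Assumption~\ref{ass:U}, $r_{kl}\ne0$ for all $k,l$) enter: conditioned on $\vect X$ with $|\vect X|$ large, the phases $\Theta_{ki}\sim\iid\ \mathcal U(0,2\pi)$ in the innermost factor $\matr D_m$ (or the last unit's phases $\Phi_k,\Psi_l$ in \eqref{eq:Mkl}) make $(\matr M\vect X)_k = e^{j\Phi_k}\sum_l r_{kl}e^{j\Psi_l}X_l'$ a random variable whose conditional law, given everything except $\{\Psi_l\}$, has a density that is spread over an annulus of radius $\asymp |\vect x|$ and is bounded uniformly. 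Concretely, $\sum_l r_{kl}e^{j\Psi_l}X_l'$ is a sum of independent bounded-phase terms with at least one term of magnitude $\gtrsim |\vect x|/\sqrt n$ (since $\sum_l|X_l'|^2 = |\vect x|^2$ forces some $|X_l'|\ge |\vect x|/\sqrt n$), and a single free uniform phase on a term of magnitude $\rho$ already guarantees $\pr(|\text{sum}|<c)\le C c/\rho$. Therefore $\pr(|(\matr M\vect X)_k|<c\,|\,\vect X)\le C c\sqrt n/|\vect X|$, and integrating over $\vect X\in\mathcal R^+_\kappa$ (using $|\vect X|>\kappa$ there, after invoking Lemma~\ref{lemm:decomposition} / the choice of $\mathcal R^+_\kappa$ if needed to discard a bounded exceptional part) gives $\pr(|(\matr M\vect X)_k|<c)\to 0$ as $\kappa\to\infty$.

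Finally I would reinstate the additive noise: $|Y_k|\ge |(\matr M\vect X)_k|-|Z_k|$, and $\vect Z(k)$ has a fixed distribution independent of $\vect X$ (from \eqref{eq:M-m-seg}, and Gaussian $\normalc{0}{\matd K}$ in the constant-loss case), so $|Z_k|$ is tight; hence $\pr(|Y_k|<c)\le \pr(|(\matr M\vect X)_k|<2c)+\pr(|Z_k|>c)$ and the first term vanishes while the second is made small by choosing the comparison threshold. Letting $c$ be arbitrary finite, $\pr(|Y_k|<c)\to0$, which is exactly the statement that the PDF of $Y_k$ escapes to infinity as $\kappa\to\infty$. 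The main work — and the only genuinely delicate point — is the anti-concentration estimate for $(\matr M\vect X)_k$ in the previous paragraph: one must verify that despite $\matr M$ being a nonlinear random operator depending on $\vect X$, the phase randomness it inherits (which, by Lemma~\ref{lemm:uniform} and its vectorial extension, becomes exactly uniform in the high-power limit) is enough to prevent any single output coordinate from collapsing onto a bounded set.
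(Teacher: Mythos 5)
Your core mechanism is the right one, and it is essentially the paper's: a single free phase that is (asymptotically) uniform, attached to a term of magnitude $\rho\gtrsim\kappa$, has bounded density on a circle of radius $\rho$, so the conditional probability that the sum lands in any disc of radius $c$ is $O(c/\rho)\to 0$. This is exactly what the paper does in rescaled form: it sets $\epsilon=1/\kappa$, isolates the index set $\mathcal I=\{i:|x_i|\ge\kappa^{1-\delta}\}$ of large coordinates (nonempty precisely because $\vect x\in\mathcal R^+_\kappa$), shows the small coordinates vanish after scaling, and concludes $\pr(W<c)\le\epsilon c\,\norm{p_{T_0}}_\infty\to0$ from the boundedness of the density of $e^{jU_l}$. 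Your bound $\pr(|{\rm sum}|<c\,|\,\vect X)\le Cc\sqrt n/|\vect X|$ is the same estimate in different clothing. The treatment of the additive noise by tightness is also fine, and your opening paragraph about $\sigma_{\min}(\matr M)$ is not needed for the coordinate-wise claim (as you note, a large vector norm does not control a single coordinate), so it can be dropped.

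The genuine gap is the multi-unit propagation, and it is not merely a ``delicate point to verify'' --- it is where the actual work lies, and your argument as stated is circular there. You invoke the representation $\matr M=\prod_k\matd R\matr D_k$ of \eqref{eq:M-expr} with $\Theta_{ki}\sim\iid\ \mathcal U(0,2\pi)$, but that representation is only valid once one knows that the signal entering \emph{every} nonlinear segment along the link has escaped to infinity --- which is precisely the content of Lemma~\ref{lemm:scape}. For the first unit the free phase $\Psi_{l_0}$ on the coordinate with $|x_{l_0}|\ge\kappa$ is asymptotically uniform by Lemma~\ref{lemm:uniform} and is conditionally independent of the other branches given $\vect x$, so your anti-concentration step is sound there. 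But for the last unit (whose phases you propose to condition out last), uniformity of $\Psi_l=\Theta(V_l,\vect N_l^1)$ requires $|V_l|$ to be large with probability tending to one, and $V_l$ is itself the random output of the preceding units. The paper closes this loop by induction: it first proves escape to infinity for the intermediate vector $\vect V$ using only the first unit's phases, and then repeats the argument for subsequent units, replacing $e^{jU_l}r_{kl}$ by the entries $M_{kl}$ and arguing that these tend to input-independent random variables with densities free of atoms. Your proposal needs this inductive scaffolding made explicit; without it, the claim that the composite map inherits iid uniform phase randomness at every stage presupposes the conclusion.
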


\begin{proof}
The proof is based on induction in the MSSFM units. We make precise
the intuition that, as $\kappa\rightarrow \infty$, the PDF of $|Y_k|$
spreads out, so that an ever decreasing probability is assigned to any
finite interval.

Consider vector $\vect V$ in
Fig.~\ref{fig:mssfm}, at the end of the linear step in the first unit. Setting $W=\abs{V_k}$, we have
\begin{IEEEeqnarray}{rCl}
\pr(W<c)&=&\int\limits_{0}^{c} p_W(w)\der w\nn\\
&=&
\int\limits_{0}^{\epsilon c} \frac{1}{\epsilon}p_W(\frac{t}{\epsilon})\der t\nn\\
&\leq&
\epsilon c \norm{p_{T_\epsilon}(t)}_{\infty},
\label{eq:p(W<c)}
\end{IEEEeqnarray}
where $T_{\epsilon}=\epsilon W$, $p_{T_{\epsilon}}(t)=\frac{1}{\epsilon}p_W(\frac{t}{\epsilon})$ and
$\epsilon\eqdef 1/\kappa$. Below, we prove that $\norm{p_{T_0}(t)}_{\infty}<\infty$.

Fix $0<\delta<1$ and define the (non-empty) index set
\begin{IEEEeqnarray*}{rCl}
\mathcal I=\{i: |x_i|\geq \kappa^{1-\delta}\}.
\end{IEEEeqnarray*}
The scaled random variable $T_{\epsilon}$ is
\begin{IEEEeqnarray*}{rCl}
T_{\epsilon}&=&\epsilon|V_k|\\
 &=&\Bigl|\sum\limits_{l=1}^n e^{j\Psi_l(x_l,\vect{N}_l^1)}r_{kl}\tilde{x}_l+\epsilon\tilde{\vect{Z}}\Bigr|\\
&=&\Bigl|
\sum\limits_{l\in\mathcal I}
+
\sum\limits_{l\notin\mathcal I}+\epsilon\tilde{\vect Z} 
\Bigr|,
\end{IEEEeqnarray*}
where $\tilde{\vect Z}$ is an additive noise and $\tilde{\vect x}=\epsilon\vect
x$.

As $\epsilon\rightarrow 0$,
the second sum vanishes because, if $l\notin \mathcal I$, 
$|\tilde{x}_l|<\epsilon^\delta\rightarrow 0$. In the first sum,
$|x_l|\rightarrow\infty$, thus $\Psi_l(x_l,\vect{N}_l^1)\overset{\textnormal{a.s.}}{\rightarrow}
U_l$, where $U_l\sim\mathcal U(0,2\pi)$. Therefore
$T_{\epsilon}\overset{\textnormal{a.s.}}{\rightarrow} T_0$, in which
\begin{IEEEeqnarray}{rCl}
  T_{0}=\Bigl|\sum\limits_{l\in\mathcal
    I}e^{jU_l}r_{kl}\tilde x_l\Bigr|,
\label{eq:T-eps}
\end{IEEEeqnarray}
where $|\tilde x_l|>0$. 
Since the PDF of $e^{jU_l}$ is in
$L^{\infty}(\mathbb T)$ on the circle $\mathbb T$, so is the conditional PDF
$p_{T_0|\tilde{\vect{ X}}}(t|\tilde{\vect x})$, \ie,
\begin{IEEEeqnarray}{rCl}
\norm{p_{T_0}(t)}_{\infty}<\infty.
\label{eq:p(t)<infty}
\end{IEEEeqnarray}
 Substituting \eqref{eq:p(t)<infty} into \eqref{eq:p(W<c)}
\begin{IEEEeqnarray}{rCl}
\lim\limits_{\epsilon \rightarrow 0}\pr(W<c)=0.
\label{eq:p(W<c)-2}
\end{IEEEeqnarray}

In a similar way, \eqref{eq:p(W<c)-2} can be proved for $\vect V$ at the output of
the linear step in the second unit, by replacing $e^{jU_l}r_{kl}$
in \eqref{eq:T-eps} with $M_{kl}$, and noting that, 
as $\epsilon\rightarrow 0$, $\{M_{kl}\}_{l\in\mathcal I}$ tend to random variables
independent of input, with a smooth PDF (without delta functions). 

\end{proof}

From the Lemma~\ref{lemm:scape}, as $\kappa\rightarrow\infty$ the
probability distribution at the input of
every zero-dispersion segment in the link 
escapes to infinity, turning the operation
of the nonlinearity in that segment into multiplication by a uniform
phase and independent noise addition.  We thus obtain an input region $\mathcal R^+_\kappa$ for
which, if $\vect x\in\mathcal R^+_\kappa$, the channel is
multiplication by a random matrix, as described in
\eqref{eq:Y=HX+N}. The channel converts any small noise into
worst-case noise in evolution.

\subsection{Step 3): The Asymptotic Capacity}
\label{sec:asymptotic-capacity}

In this section, we obtain the asymptotic capacity of the channel \eqref{eq:law}.

Applying Lemma~\ref{lemm:decomposition} to \eqref{eq:law}
\begin{IEEEeqnarray*}{rCl}
  \bar R(\const P)=\lambda \bar R_-(\const P)+(1-\lambda)\bar
  R_+(\const P),
\end{IEEEeqnarray*}
where $\bar R_{\pm}(\const P)=\frac{1}{n}I(\vect X;\vect Y)$, $\vect X\in\mathcal
R^{\pm}_\kappa$  and  $\lambda$ is a parameter to be optimized. To
shorten the analysis, we ignore
the term $c=H(\lambda)/n$ in \eqref{eq:R1-R2-H}, as it does not depend
on $\const P$.

We choose $\kappa$ sufficiently large, \emph{independent of the average
input power}
$\const P$. From the Lemma~\ref{lemm:I<infty}, $\sup_{\const P}\bar R_-(\const
P)<\infty$. 
The following Lemma shows that $\bar{R}_+(\const P)$
is given by the logarithmic terms in Theorem~\ref{thm:main} with
\begin{IEEEeqnarray*}{rCl}
c\geq\lambda \bar R_-+(1-\lambda)\sup\limits_{\hat{\vect X}}\frac{1}{n}I(\hat{\vect X};\matr M\hat{\vect X}),  
\end{IEEEeqnarray*}
where $\bar R_-$ is the achievable rate at low powers. If $\matr M$ is Haar
distributed, $c=\lambda \bar R_-$.

Define $h(\matr M)\eqdef h(M_{11},\cdots, M_{nn})$. 

\begin{lemma}
Assume that 
\begin{IEEEeqnarray}{rCl}
h(\matr M)>-\infty, \quad \E|M_{ij}|^2<\infty,\quad 1\leq
i,j\leq n.
\label{eq:h(M)>-infty}
\end{IEEEeqnarray}
Then, the asymptotic capacity of \eqref{eq:Y=HX+N} is given by the
expressions stated in the Theorem~\ref{thm:main}.
\label{lem:cap-Y=HX+N}
\end{lemma}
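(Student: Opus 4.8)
The plan is to analyze the channel $\vect Y = \matr M\vect X + \vect Z$ in the spherical coordinate system, separating the norm $\abs{\vect X}$ from the direction $\hat{\vect X}$, exactly as the informal discussion preceding the lemma suggests. Write $I(\vect X;\vect Y) = I(\abs{\vect X};\abs{\vect Y}) + I(\abs{\vect X};\hat{\vect Y}\mid\abs{\vect Y}) + I(\hat{\vect X};\vect Y\mid\abs{\vect X})$ via the chain rule (equation \eqref{eq:I3} applied to the discrete model). I expect to show: (i) the norm term $I(\abs{\vect X};\abs{\vect Y})$ supplies the leading logarithmic growth — $\frac{1}{2}\log\const P+c$ in the constant-loss case and $\log\log\const P+c$ in the non-constant-loss case — exactly as in the derivations around \eqref{eq:chi-squared}--\eqref{eq:I(|X|;|Y|)} and \eqref{eq:multi-chan}--\eqref{eq:log-fading}; and (ii) the two ``direction'' terms are bounded in $\const P$, hence absorbed into $c$. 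Dividing by $n$ then gives the stated expressions.

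First I would handle the norm term. For constant loss, $\matr M = e^{-\frac12\alpha\const L}\matr U$ with $\matr U$ a random unitary (from \eqref{eq:M-expr} with $\matd R = e^{-\frac12\alpha_r\epsilon}\matd U$), so $\abs{\vect Y}^2 = \abs{e^{-\frac12\alpha\const L}\vect X + \matr U^\dag\vect Z}^2$ with $\matr U^\dag\vect Z$ having the same Gaussian law as $\vect Z$ (the constant-loss noise is $\normalc{0}{\matd K}$ with $\matd K$ a scalar multiple of $I_n$); this is a scaled non-central chi-square channel, whose asymptotic rate is $\frac{1}{2}\log\const P + c$ by the Gaussian approximation of the non-central chi-square PDF, just as in \eqref{eq:I(|X|;|Y|)}. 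For non-constant loss, $\abs{\matr M}$ is a genuine random variable, and I would use \eqref{eq:log-fading}: after invoking Lemma~\ref{lemm:decomposition} to restrict to $\abs{\vect X} > x_0$, the term $\log\bigl|\hat{\matr M}\hat{\vect X} + \vect Z/(\abs{\matr M}\abs{\vect X})\bigr|$ is a bounded random variable (its second moment is uniformly bounded by hypothesis $\E|M_{ij}|^2<\infty$, which controls $\E\abs{\matr M}^2$, together with $\E\abs{\vect Z}^2<\infty$), so $\log\abs{\vect Y} = \log\abs{\vect X} + \log\abs{\matr M} + (\text{bounded noise})$, and the asymptotic rate of this additive channel is $\frac12\log\E(\log\abs X)^2 + c = \log\log\const P + c'$, independent of the law of $\abs{\matr M}$.

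Next I would bound the two direction terms. For $I(\hat{\vect X};\vect Y\mid\abs{\vect X})$: as $\abs{\vect X}\to\infty$ the additive noise is negligible (Lemma~\ref{lemm:decomposition} lets us assume $\abs{\vect X}>x_0$), so $\vect Y\approx\abs{\vect X}\,\hat{\vect Y}$ with $\hat{\vect Y}$ on $\mathcal S^m$; by \eqref{eq:I-indentity}, $I(\hat{\vect X};\vect Y\mid\abs{\vect x}) = I(\hat{\vect X};\hat{\vect Y}\mid\abs{\vect x}) = h_\sigma(\hat{\vect Y}\mid\abs{\vect x}) - h_\sigma(\hat{\vect Y}\mid\hat{\vect X},\abs{\vect x})$. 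The first term is at most $\log A_n$ by the maximum-entropy theorem on the compact sphere; the second is bounded below by $-\infty$ by the same free-variable counting argument used in the proof of the Lemma stated earlier — the phase variables $\Theta_{ki}$ and amplitude variable structure in $\matr M = \prod_k\matd R\matr D_k$ give more free random variables than constraints, since the dispersion matrices $\matd R$ are fully dispersive (Assumption~\ref{ass:U}), so each $\angle W_k$ and $\abs{W_k}$ carries a residual entropy. The term $I(\abs{\vect X};\hat{\vect Y}\mid\abs{\vect Y})$ is handled symmetrically (it is bounded by $\log A_n$ trivially from the compact support of $\hat{\vect Y}$). Both are bounded in $\const P$ and hence go into $c$.

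The main obstacle is making the non-coherent-fading lower bound on $h_\sigma(\hat{\vect Y}\mid\hat{\vect X},\abs{\vect x})$ rigorous for the specific random matrix $\matr M = \prod_{k=1}^m\matd R\matr D_k$ — that is, verifying that this product of $m$ diagonal-phase and dispersion matrices retains enough ``spread'' in all $2n-1$ spherical angles of $\hat{\vect Y}$ that no angle degenerates to a deterministic function of $\hat{\vect X}$. The informal counting argument ($n+k+1$ random variables versus $2k-1$ constraints) must be upgraded to an actual lower bound on the conditional differential entropy, which requires checking that the relevant Jacobian of the map from the phase/amplitude parameters to $(\abs{W_k},\angle W_k)$ is non-degenerate almost surely; this is where Assumption~\ref{ass:U} (full dispersivity of $\matd R$) does the real work, and where the bulk of the technical effort will lie. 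A secondary subtlety is that $\matr M$ is in general non-Gaussian and its entries are not independent, so the hypothesis $h(\matr M)>-\infty$ together with $\E|M_{ij}|^2<\infty$ must be used carefully — the former to guarantee the residual direction entropy, the latter to keep the additive-noise correction in $\log\abs{\vect Y}$ bounded — and one should confirm these are exactly the two ingredients, no more, that the argument consumes, which is the point the surrounding text emphasizes.
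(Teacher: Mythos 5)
Your decomposition is the paper's: the same chain rule $I(\vect X;\vect Y)=I(\abs{\vect X};\abs{\vect Y})+I(\hat{\vect X};\vect Y\mid\abs{\vect X})+I(\abs{\vect X};\hat{\vect Y}\mid\abs{\vect Y})$, the same treatment of the norm term (non-central chi-square for constant loss, the $\log\abs{\vect Y}$ additive channel for non-constant loss), and the same upper bounds on the direction terms via the maximum-entropy theorem on the compact sphere and $\E|M_{ij}|^2<\infty$. The gap is in the one step you yourself flag as the ``main obstacle'': the lower bound on the conditional entropy $h(\matr M\hat{\vect x})$ for fixed $\hat{\vect x}$. You propose to obtain it by re-running the free-variable counting argument on the specific product $\matr M=\prod_k\matd R\matr D_k$, and you correctly observe that making that counting rigorous (non-degenerate Jacobians, etc.) is unresolved. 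But this also mis-targets the lemma: the statement is about an \emph{arbitrary} random matrix satisfying \eqref{eq:h(M)>-infty}, and a proof that leans on the fiber-specific structure of $\matr M$ never actually consumes the hypothesis $h(\matr M)>-\infty$ — which is the whole point of stating the lemma abstractly (the verification that the fiber matrix satisfies \eqref{eq:h(M)>-infty} is a separate lemma in the paper).

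The missing idea is Lemma~\ref{lem:h(Mx)}: for any non-zero deterministic $\vect x$ with, say, $x_1\neq 0$, one augments $\vect Y=\matr M\vect x$ with the entries $\{M_{ij}\}_{j\geq 2}$ to form a vector $\vect V=\matd A\vect M$ that is a \emph{linear, invertible} function of the vectorized matrix $\vect M$, with $\log|\det\matd A|=n\log\abs{x_1}$. Then
\begin{IEEEeqnarray*}{rCl}
h(\matr M\vect x)= h(\matr M)+ n\log\abs{x_1}-h\bigl(\{M_{ij}\}_{j\geq 2}\bigm|\vect Y\bigr),
\end{IEEEeqnarray*}
and the last term is bounded above by the MET using $\E|M_{ij}|^2<\infty$, so $h(\matr M\vect x)>-\infty$ follows directly from $h(\matr M)>-\infty$. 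This single change-of-variables identity replaces the entire informal free-variable count, sidesteps the Jacobian non-degeneracy issue you worry about, and is exactly where the hypothesis \eqref{eq:h(M)>-infty} enters. With it, $h(\matr M\hat{\vect X}\mid\abs{\vect X},\hat{\vect X})\geq\inf_{\hat{\vect x}}h(\matr M\hat{\vect x})>-\infty$ bounds the second mutual-information term, and the third term follows by applying the identity \eqref{eq:sph-leb} to $\matr M\hat{\vect X}$ together with the same lower bound and the moment bounds on $\abs{\matr M}_F$. Without this (or an equivalent device), your proposal does not close.
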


\begin{proof}

The capacity of the multiple-input multiple-output non-coherent
memoryless fading channel \eqref{eq:Y=HX+N} is studied in
\cite{moser2004dbb,lapidoth2003capacity}. Here, we present
a short proof with a bit of approximation. 

Using chain rule for the mutual information
\begin{IEEEeqnarray}{rCl}
I(\vect X;\vect Y)&=&I(\abs{\vect X}; \vect Y)+I(\hat{\vect X}; \vect
Y\bigl|\abs{\vect X})\nn\\
&=&I(\abs{\vect X}; \abs{\vect Y})+I(\hat{\vect X}; \vect Y\bigl|\abs{\vect X})+I(\abs{\vect X}; \hat{\vect
  Y}|\abs{\vect Y}).
\IEEEeqnarraynumspace
\label{eq:I(X;Y)}
\end{IEEEeqnarray}

The first term in \eqref{eq:I(X;Y)} gives the logarithmic terms in Theorem~\ref{thm:main}, as
calculated in Section~\ref{sec:proof1}. We prove that the other
terms are bounded in $\abs{\vect X}$. From the Lemma~\ref{lemm:decomposition}, the additive noise in \eqref{eq:Y=HX+N} can be ignored when $\vect
X\in\mathcal R^+_\kappa$,  so that $\vect Y\approx\matr M\vect X$.

The second term in \eqref{eq:I(X;Y)} is
\begin{IEEEeqnarray*}{rCl}
I(\hat{\vect X}; \vect Y\bigl|\abs{\vect X}) &=&  I(\hat{\vect X};
|\vect X| \matr M\hat{\vect X}\bigl|\abs{\vect X})\\
&=&I(\hat{\vect
  X}; \matr M\hat{\vect X}\bigr| \abs{\vect X}),
\end{IEEEeqnarray*}
where we used identity \eqref{eq:I-indentity}. Note that we can not assume that $\abs{\vect X}$ and $\hat{\vect X}$ are
independent. 

For the output entropy 
\begin{IEEEeqnarray*}{rCl}
  h(\matr M\hat{\vect X}|\abs{\vect X})&\leq& h(\matr M\hat{\vect X})\\
 &\overset{(a)}{\leq}&\sum\limits_{k=1}^n
 h\Bigl(\sum\limits_{l=1}^n M_{kl}\hat{\vect X}_l \Bigr)
\\
  &\overset{(b)}{\leq}&\sum\limits_{k=1}^n \log\Bigl(\pi e
   \E\Bigl|\sum\limits_{l=1}^n M_{kl}\hat{\vect
     X}_l\Bigr|^2\Bigr)
\\
 &\overset{(c)}{\leq}&\sum\limits_{k=1}^n \log\Bigl(\sum\limits_{l=1}^n
   \E|M_{kl}|^2\Bigr)+n\pi e
\\
 &\overset{(d)}{\leq}&n\log\Bigl(\frac{1}{n}\E|\matr{M}|^2_F\Bigr)+n\pi e\\
&<&\infty,
\end{IEEEeqnarray*}
where $\abs{\matr M}_F=\Bigl(\sum\limits_{k,l=1}^n|M_{kl}|^2\Bigr)^{\frac{1}{2}}$ is the Frobenius norm. Step $(a)$ is
obtained using the inequality 
$h(\vect W)=\sum_k h(W_k|W^{k-1})\leq\sum_k h(W_k)$. 
Step $(b)$ is due to the MET. Cauchy-Schwarz and
Jensen's  inequalities are, respectively, applied in steps $(c)$ and $(d)$. 

For the conditional entropy
\begin{IEEEeqnarray}{rCl}
h(\matr M\hat{\vect   X}|\abs{\vect X}, \hat{\vect X})&=&
\E_{\vect{X}} h(\matr M\hat{\vect   x}|\abs{\vect x}, \hat{\vect x})
\nn\\
&\geq& \inf\limits_{\hat{\vect x}} h(\matr M\hat{\vect x})\nn\\
&\overset{(a)}{>}&-\infty. 
\label{eq:inter-6-aa}
\end{IEEEeqnarray}
Step $(a)$ holds because, from the Lemma~\ref{lem:h(Mx)}, $h(\matr
M\hat{\vect x})>-\infty$ for any $\hat{\vect x}$.

The third term in \eqref{eq:I(X;Y)} can be upper bounded using the second 
term by setting $\vect X=\matr{M}^{-1}\vect Y$. We prove it alternatively. Since $\hat{\vect Y}$ is compactly supported,
$h_{\sigma}(\hat{\vect Y}||\abs{\vect Y}|)\leq h_{\sigma}(\hat{\vect
  Y})<\infty$. The conditional entropy is
\begin{IEEEeqnarray}{rCl}
h_{\sigma}(\hat{\vect Y}\bigr|\abs{\vect X},\abs{\vect Y})&=&
h_{\sigma}(\hat{\vect Y}\bigr|\abs{\vect X},\abs{\vect X}|\matr M\hat{\vect X}|)
\nn\\
&=&h_{\sigma}\Bigl(\frac{\matr M\hat{\vect X}}{|\matr M\hat{\vect
    X}|}\Bigr|\abs{\vect X},|\matr M\hat{\vect X}|\Bigr).
\label{eq:cond-entropy-Y=HX+N}
\end{IEEEeqnarray}

Applying identity \eqref{eq:sph-leb} to $\matr M\hat{\vect{X}}$
and conditioning on $\abs{\vect X}$
\begin{IEEEeqnarray}{rCl}
h_{\sigma}\Bigl(\frac{\matr M\hat{\vect X}}{|\matr M\hat{\vect
    X}|}\Bigr|\abs{\vect X}, |\matr M\hat{\vect X}|\Bigr) &=& h(\matr
M\hat{\vect X}\bigr| \abs{\vect X})-h(|\matr M\hat{\vect X}|\bigr|
\abs{\vect X})\nn\\
&&
-(2n-1)\E\bigl(\log(|\matr M\hat{\vect X}|)\bigr|\abs{\vect X}\bigr).
\IEEEeqnarraynumspace
\label{eq:inter-6}
\end{IEEEeqnarray}
For the first term in \eqref{eq:inter-6} 
\begin{IEEEeqnarray}{rCl}
\IEEEyesnumber
h(\matr M\hat{\vect X}|\abs{\vect X})&\geq&h(\matr M\hat{\vect
  X}|\abs{\vect X}, \hat{\vect X})\nn\\
&>&-\infty, 
\IEEEyessubnumber
\label{eq:inter-6-a}
\end{IEEEeqnarray}
where we used \eqref{eq:inter-6-aa}.
Since $|\matr M\hat{\vect X}|\leq\abs{\matr M}\leq \abs{\matr M}_F$,
from the MET
\begin{IEEEeqnarray}{rCl}
h\bigl(|\matr M\hat{\vect X}|\bigr|\abs{\vect X}\bigr)&\leq&  
h\bigl(|\matr M\hat{\vect X}|\bigr)\nn\\
&\leq& \frac{1}{2}\log \bigl(2\pi e\E|\matr M\hat X|^2\bigr)\nn\\
&\leq& \frac{1}{2}\log \bigl(2\pi e\E\abs{\matr M}_F^2\bigr)\nn\\
&<&\infty.
\IEEEyessubnumber
\label{eq:inter-6-b}
\end{IEEEeqnarray}
Furthermore, 
\begin{IEEEeqnarray}{rCl}
  \E\bigl(\log(|\matr M\hat{\vect X}|^2)\bigr|\abs{\vect X}\bigr)&\leq&
  \log\E\bigl(|\matr M\hat{\vect X}|^2\bigr|\abs{\vect X}\bigr)
\nn\\
 &\leq&\log\E\abs{\matr M}_F^2\nn\\
&<&\infty.
\IEEEyessubnumber
\label{eq:inter-6-c}
\end{IEEEeqnarray}

Substituting \eqref{eq:inter-6-a}--\eqref{eq:inter-6-c} into
\eqref{eq:inter-6} and \eqref{eq:cond-entropy-Y=HX+N}, we obtain
\begin{IEEEeqnarray*}{rCl}
h_{\sigma}(\hat{\vect Y}\bigr|\abs{\vect X},\abs{\vect Y})>-\infty.
\end{IEEEeqnarray*}

\end{proof}

The main ingredient in the proof of the Lemma~\ref{lem:cap-Y=HX+N}, as
well as  Theorem~\ref{thm:main},  is the following lemma.

\begin{lemma}
Let $\matr M$ be a random matrix and $\vect x\in\Complex^n$ 
a non-zero deterministic vector. If $\matr M$ satisfies the assumptions \eqref{eq:h(M)>-infty}, then
\begin{IEEEeqnarray*}{rCl}
  h(\matr M \vect x)>-\infty.
\end{IEEEeqnarray*}
\label{lem:h(Mx)}
\end{lemma}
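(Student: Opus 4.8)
The plan is to show that $\matr M\vect x$ has finite differential entropy by exhibiting, inside the vector $\matr M\vect x$, a subset of $n$ coordinates of $\matr M$ that enter $\matr M\vect x$ through an invertible linear map (with deterministic, nonsingular Jacobian), and then using the change-of-variables formula for differential entropy together with the hypotheses $h(\matr M)>-\infty$ and $\E|M_{ij}|^2<\infty$. Write $(\matr M\vect x)_k=\sum_{l=1}^n M_{kl}x_l$. Since $\vect x\neq 0$, fix an index $l_0$ with $x_{l_0}\neq 0$. First I would condition on all entries $M_{kl}$ with $l\neq l_0$: conditioned on these, the map $(M_{1l_0},\dots,M_{nl_0})\mapsto (\matr M\vect x)_1,\dots,(\matr M\vect x)_n$ is an affine bijection of $\Complex^n$ whose linear part is $x_{l_0}I_n$, hence has constant Jacobian determinant $|x_{l_0}|^{2n}$. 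Therefore
\begin{IEEEeqnarray*}{rCl}
h(\matr M\vect x)&\geq& h\bigl(\matr M\vect x\,\big|\,\{M_{kl}\}_{l\neq l_0}\bigr)\\
&=& h\bigl((M_{1l_0},\dots,M_{nl_0})\,\big|\,\{M_{kl}\}_{l\neq l_0}\bigr)+2n\log|x_{l_0}|,
\end{IEEEeqnarray*}
where the first line is ``conditioning reduces entropy'' and the second is the conditional change-of-variables formula.

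It remains to show the conditional entropy $h\bigl(M_{\cdot l_0}\,\big|\,\{M_{kl}\}_{l\neq l_0}\bigr)$ is finite. For the lower bound, chain rule gives $h(\matr M)=h(\{M_{kl}\}_{l\neq l_0})+h(M_{\cdot l_0}\mid \{M_{kl}\}_{l\neq l_0})$; since $h(\matr M)>-\infty$ by \eqref{eq:h(M)>-infty} and $h(\{M_{kl}\}_{l\neq l_0})\leq \frac12\sum_{k,l\neq l_0}\log(2\pi e\,\E|M_{kl}|^2)<\infty$ by the MET and the second-moment hypothesis, the conditional term is bounded below by $-\infty$. For the upper bound, conditioning reduces entropy and the MET again bounds $h(M_{\cdot l_0}\mid\cdot)\leq h(M_{\cdot l_0})\leq \frac12\sum_k\log(2\pi e\,\E|M_{kl_0}|^2)<\infty$. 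Combining, $-\infty<h(M_{\cdot l_0}\mid\{M_{kl}\}_{l\neq l_0})<\infty$, and since $x_{l_0}\neq 0$ the additive constant $2n\log|x_{l_0}|$ is finite, so $h(\matr M\vect x)>-\infty$.

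The one point requiring care — and the step I expect to be the main (if modest) obstacle — is the rigorous justification of the conditional change-of-variables identity: differential entropies must be well-defined (not $-\infty+\infty$) for the manipulation to be valid, which is exactly why one needs the two-sided bounds on $h(\matr M)$ and its marginals rather than just the stated lower bound; the finite-second-moment assumption in \eqref{eq:h(M)>-infty} is what supplies the matching upper bounds via the maximum-entropy theorem. A secondary subtlety is that $\matr M$ here is the nonlinear random operator of \eqref{eq:M-expr}, so one should confirm its entries genuinely satisfy \eqref{eq:h(M)>-infty}; but that verification is deferred to Lemma~\ref{lem:h(M-fiber)>-infty}, so for the present lemma it may be taken as a hypothesis. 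No structural assumption on $\matr M$ beyond \eqref{eq:h(M)>-infty} is used — in particular no independence among its entries — which is consistent with the paper's emphasis that the conclusion depends only on whether the random operator has a deterministic component.
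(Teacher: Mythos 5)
Your proof is correct and is essentially the paper's argument in a different dress: the paper performs the change of variables jointly (replacing the column $M_{\cdot 1}$ of the vectorized matrix by $\vect Y=\matr M\vect x$ via the block-diagonal map $\vect V=\matd A\vect M$ with $\log|\det\matd A|$ contributing the $\log|x_1|$ term, then peeling off $h(\{M_{ij}\}_{j\geq 2}|\vect Y)$ with the MET), whereas you perform the same substitution by conditioning on $\{M_{kl}\}_{l\neq l_0}$ and applying the chain rule --- the two computations yield the same lower bound. Your attention to well-definedness of the conditional entropies (supplied by the second-moment hypothesis) matches the role that assumption plays in the paper's proof.
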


\begin{proof}

Since $\vect x\neq 0$, at least one element of $ \vect x$
is nonzero, say $ x_1\neq 0$.
We switch the order of $\matr M$ and $\vect x$ in the product
$\matr M\vect x$ as follows.  
 Let $\vect{ M}\in\Complex^{n^2}$ denote the 
vectorized version of $\matr M$, where rows are concatenated as a
column vector. Define $\vect V\in\Complex^{n^2}$ as follows: 
\begin{IEEEeqnarray}{rCl}
V_{k}=
\begin{cases}
 M_{in}, & r=0,
\\
 Y_{i+r}=\sum\limits_{l=1}^n M_{(i+r)l}x_l, & r=1,
\\
 M_{(i+1)r}, & r\geq 2,
\end{cases}
\IEEEeqnarraynumspace
\label{eq:V-vec}
\end{IEEEeqnarray}
where $k=in+r$, $0\leq i\leq n$, $0\leq r \leq n-1$. Then $\vect
Y=\matr M \vect  x$ is transformed to \eqref{eq:V-vec}, which in
matrix notation is
\begin{IEEEeqnarray}{rCl}
\vect V=\matd A\vect M,
\label{eq:V=AM}  
\end{IEEEeqnarray}
in which $\matd A_{n^2\times n^2}=\diag(\underbrace{\matd
  X,\cdots,\matd X}_{n~\textnormal{times}})$, where the
deterministic matrix $\matd X_{n\times n}$ is
\begin{IEEEeqnarray*}{rCl}
\matd X=\diag\left(
\begin{matrix}
 x_1 & x_2^n\\
 0 & I_{n-1}
\end{matrix}
\right),\quad x_2^n=( x_2,\cdots, x_n),
\end{IEEEeqnarray*}
in which $0$ is the $(n-1)\times 1$ all-zero matrix. 
From \eqref{eq:V=AM}
\begin{IEEEeqnarray*}{rCl}
h(\vect V|\vect{ x})&=&h(\vect M|\vect x)+\log|\det \matd A|\\
&=&h(\vect M)+n\log\abs{x_1}\\
&=&h(\matr M)+n\log\abs{x_1}.
\end{IEEEeqnarray*}
On the other hand, from \eqref{eq:V-vec}
\begin{IEEEeqnarray*}{rCl}
h(\vect V|\vect{ x})&\overset{(a)}{=}&h\left(\vect Y, \{ M_{ij}\}_{j\geq 2}\bigl|\vect{x}\right)\\
&=&h(\vect Y |\vect{ x} )+h\left(\{ M_{ij}\}_{j\geq 2}|\vect x,
  \vect Y\right)\\
&=&h(\matr M\vect x )+h\left(\{ M_{ij}\}_{j\geq 2}|\vect Y\right).
\end{IEEEeqnarray*}
Step $(a)$ holds because conditions $r=1$ and $r=0,1$ in
\eqref{eq:V-vec} include, respectively, $\vect Y$ and $\{M_{ij}\}_{j\geq 2}$. Combining the last two relations
\begin{IEEEeqnarray*}{rCl}
h(\matr M\vect x)= h(\matr M)+ n\log\abs{  x_1}-h\left(\{M_{ij}\}_{j\geq 2}|\vect Y\right).
\end{IEEEeqnarray*}
If  $\E |M_{ij}|^2 < \infty$, from the MET, the
last term is bounded from below. Since $h(\matr M)>
-\infty$ and $x_1\neq 0$, $h(\matr M\vect x)>-\infty$. 

\end{proof}

\begin{lemma}
The random matrix $\matr M$ \eqref{eq:M-expr}, underlying optical
fiber at high powers, satisfies the assumptions of the
Lemma~\ref{lem:cap-Y=HX+N}.
\label{lem:h(M-fiber)>-infty}
\end{lemma}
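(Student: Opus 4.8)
The plan is to verify the two conditions in \eqref{eq:h(M)>-infty} — namely $h(\matr M)>-\infty$ and $\E|M_{ij}|^2<\infty$ — for the explicit product matrix $\matr M=\prod_{k=1}^m\matd R\matr D_k$ with $\matr D_k=\diag(e^{j\Theta_{ki}})$, $\Theta_{ki}\sim\iid\ \mathcal U(0,2\pi)$. The second condition is routine: each $\matd R$ is a fixed matrix and each $\matr D_k$ is unitary (or, in the lossy model, has norm $e^{-\frac12\alpha_r\epsilon}$), so $\abs{\matr M}_F\leq (\abs{\matd R})^m\cdot n^{1/2}\cdot e^{-\frac12\alpha_r\const L}$ is a deterministic bound, hence $\E|M_{ij}|^2<\infty$ trivially. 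The real content is the lower bound on the joint differential entropy $h(\matr M)=h(M_{11},\dots,M_{nn})$.

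First I would reduce $h(\matr M)$ to the entropy of the underlying phases. The $n^2$ entries $M_{kl}$ are deterministic smooth functions of the $nm$ independent angles $\{\Theta_{ki}\}$. Take $n^2$ of these angles — say the angles $\Theta_{1i}$, $i=1,\dots,n$, from the first diagonal block and enough from subsequent blocks — and change variables from this selected subset $\vect\Theta'$ (conditioned on the remaining angles $\vect\Theta''$) to $\matr M$. By the change-of-variables formula for differential entropy, $h(\matr M)\geq h(\matr M\mid\vect\Theta'')=h(\vect\Theta'\mid\vect\Theta'')+\E\log\abs{\det J}$, where $J=\partial\matr M/\partial\vect\Theta'$ is the Jacobian. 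Since $\vect\Theta'$ is uniform on a box, $h(\vect\Theta'\mid\vect\Theta'')$ is a finite constant. So the lemma reduces to showing that $\E\log\abs{\det J}>-\infty$, i.e. that the map is nondegenerate almost surely and that $\log\abs{\det J}$ is integrable from below.

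The main obstacle — and the step I would spend the most care on — is establishing that $\det J\neq 0$ with probability one and controlling the left tail of $\log\abs{\det J}$. Nondegeneracy should follow from Assumption~\ref{ass:U} (full dispersivity, $r_{kl}\neq 0$ for all $k,l$): because each $\matd R$ mixes all coordinates, perturbing the first-block phases propagates to all entries of $\matr M$, and $\det J$ is a nonzero real-analytic function of the remaining phases, hence vanishes only on a measure-zero set. For integrability of $\log\abs{\det J}$ near that zero set, I would use the standard fact that a nonzero real-analytic function on a compact set has logarithm that is locally integrable (its zero set has finite $(\dim-1)$-Hausdorff measure and the function vanishes at finite order), so $\E\log\abs{\det J}>-\infty$. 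An alternative, more hands-on route that avoids the general analytic-function argument: induct on units as in Lemma~\ref{lemm:scape}, using Lemma~\ref{lem:h(Mx)} at each stage — one shows $\matd R\matr D_1$ has finite entropy of its entries (the entries are $r_{kl}e^{j\Theta_{1l}}$, so their entropy is $n$ copies of $\log 2\pi$ plus a $\log r_{kl}$ shift, manifestly finite), and then each further left-multiplication by $\matd R\matr D_{k+1}$ preserves $h(\cdot)>-\infty$ because $\matd R$ is invertible and $\matr D_{k+1}$ injects fresh independent phases with finite entropy in the same way Lemma~\ref{lem:h(Mx)} handles the product structure. Either way, combining $h(\vect\Theta'\mid\vect\Theta'')>-\infty$ with $\E\log\abs{\det J}>-\infty$ gives $h(\matr M)>-\infty$, and together with the deterministic bound on $\E|M_{ij}|^2$ this verifies \eqref{eq:h(M)>-infty}; hence $\matr M$ satisfies the hypotheses of Lemma~\ref{lem:cap-Y=HX+N}.
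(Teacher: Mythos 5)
Your handling of the second-moment condition matches the paper's (a deterministic bound on $|M_{ij}|$ via the triangle inequality / matrix norms) and is fine. The entropy condition is where the proposal breaks down, and it breaks down on exactly the point the paper's proof is organized around: \emph{how many} independent phases are available relative to the $n^2$ complex (i.e.\ $2n^2$ real) coordinates of $\matr M$. Your change of variables selects $n^2$ angles $\vect{\Theta}'$ and writes $h(\matr M\mid\vect{\Theta}'')=h(\vect{\Theta}'\mid\vect{\Theta}'')+\E\log|\det J|$; but $J=\partial\matr M/\partial\vect{\Theta}'$ maps $n^2$ real variables to $2n^2$ real ones, so it is not square and $\det J$ is undefined. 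To have any chance of a nonsingular square Jacobian you need $2n^2$ angles, i.e.\ $m\geq 2n$ steps --- and this threshold is the crux of the paper's proof, which exhibits explicit deterministic dependencies among the entries for $m=1,2$ (e.g.\ $\angle M_{11}=\angle M_{21}+k\pi$ for $m=1$, and $|r_{21}M_{12}|=|r_{12}M_{21}|$ for $m=2$), notes that $m=3$, $n=2$ still gives ``8 equations for 6 variables,'' and only then argues, by counting dimensions of the polynomial system in $e^{j\theta_i}$, that for $m>2n$ the dependencies disappear. Your proposal never identifies this threshold, and without it the nondegeneracy claim ($\det J\neq 0$ a.s.) that you defer to real-analyticity is simply false for small $m$.

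The same miscounting sinks the ``alternative, more hands-on route'': its base case is wrong. The entries of $\matd R\matr D_1$ are $r_{kl}e^{j\Theta_{1l}}$; all $n$ entries in column $l$ share the single phase $\Theta_{1l}$ and have deterministic moduli, so the joint law of the $n^2$ entries is supported on an $n$-real-dimensional subset of $\Complex^{n^2}$ and $h(\matd R\matr D_1)=-\infty$, not ``manifestly finite'' (you computed the entropy of the $n$ phases, not of the $n^2$ entries). Since each further step injects only $n$ fresh phases, $h(\matr M)=-\infty$ for every $m<2n$ and the induction cannot start before that; Lemma~\ref{lem:h(Mx)} also concerns matrix--vector products and does not directly give the matrix--matrix step. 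One further caveat your method makes visible: in the constant-loss model $\matd R$ is a scalar times a unitary, so $\matr M$ lies in a dilate of $U(n)$ for every $m$, the image of your map is confined to an $n^2$-dimensional submanifold of $\Reals^{2n^2}$, and the Jacobian is identically singular --- so the degeneracy question is genuinely delicate and not disposed of by smoothness alone. The analytic-function integrability argument for $\E\log|\det J|>-\infty$ is a reasonable (indeed more rigorous than the paper's equation count) way to \emph{finish} once full-rankness for $m>2n$ is established, but that full-rankness is the missing idea, and your proposal assumes it rather than proves it.
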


\begin{proof}

Applying the triangle inequality to \eqref{eq:M-expr}, $|M_{ij}|\leq \bigl(\matd |\matd R|^m\bigr)_{ij}$, where $|\matd R|$ is
the matrix with entries $|r_{ij}|$ and $m$ is the number of stages.

We check the entropy condition in \eqref{eq:h(M)>-infty}. In what follows, let $\theta_{i}\sim\iid\ \mathcal U(0,2\pi)$. For one
linear and nonlinear steps $m=1$:
\begin{eqnarray*}
  \matr M=
 \begin{pmatrix}
   e^{j\theta_1}r_{11} &  e^{j\theta_2} r_{12}\\
   e^{j\theta_1}r_{21} &  e^{j\theta_2}r_{22}
  \end{pmatrix}.
\end{eqnarray*}
In this case, there are four amplitude dependencies $|M_{ij}|=|r_{ij}|$, $1\leq i,j\leq 2$, and two phase dependencies:
\begin{IEEEeqnarray*}{rCl}
 \angle M_{11}=\angle M_{21}+k\pi,\quad \angle  M_{12}=\angle
 M_{22}+k\pi,\quad k=0,1.
\end{IEEEeqnarray*}
A dependency means that $\matr M$ contains a deterministic
component, \ie, $h(\matr M)>-\infty$.

For $m=2$:
\begin{eqnarray*}
  M_{11}&=&     e^{j(\theta_1+\theta_3)}r_{11}^2+  e^{j(\theta_1+\theta_4)}r_{12}r_{21},   \\
M_{12}&=&     e^{j(\theta_2+\theta_3)}r_{12}\left(r_{11}+e^{j(\theta_4-\theta_3)}r_{22}\right),\\
M_{21} &=&  e^{j(\theta_1+\theta_3)}r_{21}\left(r_{11}+e^{j(\theta_4-\theta_3)}r_{22}\right),\\
M_{22} &=&  e^{j(\theta_2+\theta_3)}r_{21}r_{12}+ e^{j(\theta_2+\theta_4)}r_{22}^2.   
\end{eqnarray*}
In this case too, there is a dependency $|r_{21}M_{12}|=|r_{12}M_{21}|$.

For $m=3$:
\begin{IEEEeqnarray*}{rCl}
  M_{11} &=& e^{j(\theta_1+\theta_3+\theta_5)}r_{11}^3+ e^{j(\theta_1+\theta_4+\theta_5)}r_{11}r_{12}r_{21}
   \\
&&+e^{j(\theta_1+\theta_3+\theta_6)}r_{11}r_{12}r_{21}+
    e^{j(\theta_1+\theta_4+\theta_6)}r_{12}r_{21}r_{22},
 \\   
 M_{12}&=&e^{j\theta_2}r_{12}\Bigl(
    e^{j(\theta_3+\theta_5)}r_{11}^2+
+ e^{j(\theta_4+\theta_6)}r_{22}^2
\\
&&+\boxed{ 
e^{j(\theta_3+\theta_6)}r_{12}r_{21}
+
 e^{j(\theta_4+\theta_5)}r_{11} r_{22}}
 \Bigr),
\\
 M_{21} &=& e^{j\theta_1}r_{21}\Bigl(e^{j(\theta_3+\theta_5)}r_{11}^2+
e^{j(\theta_4+\theta_6)}r_{22}^2
\\ 
&&+
\boxed{
 e^{j(\theta_4+\theta_5)}r_{12}r_{21}
+e^{j(\theta_3+\theta_6)}r_{11}r_{22}
}
    \Bigr),
    \\
M_{22} &=&
 e^{j(\theta_2+\theta_3+\theta_5)}r_{11}r_{12}r_{21}+ e^{j(\theta_2+\theta_4+\theta_5)}r_{12}r_{21}r_{22}
    \\
&&+e^{j(\theta_2+\theta_3+\theta_6)}r_{12}r_{21}r_{22}+
    e^{j(\theta_2+\theta_4+\theta_6)}r_{22}^3.
\end{IEEEeqnarray*}
Comparing the boxed terms, $|r_{21}M_{12}|\neq |r_{12}M_{21}|$. There
are still  8 equations for 6 variables. 

In general, the number of entries of $\matr M$ is $n^2$. As $m> 2n$
steps are taken in distance, sufficient number of random variables $\theta_i$ are
introduced in a matrix with fixed dimension. Since $n$ is fixed and
$m$ is free, we obtain an
under-determined system of polynomial equations for
$x_i=\exp(j\theta_i)$ whose solution space has positive
dimension. Thus an entry of $\matr M$ can not be determined from all other
entries. 

\end{proof}

\begin{remark}
The rate interpolation Lemma~\ref{lemm:decomposition} implies that,
replacing $\Complex^n$ by $\mathcal R^+$ changes the asymptotic capacity by
a finite number $c$. From the upper bound $\const C\leq \log(1+\snr)$ in
\cite{yousefi2015cwit2} and Theorem 2.5 in \cite{moser2004dbb}, we
think that the
asymptotic capacity can be achieved by an input distribution that escapes
to infinity. This implies that $\lambda=0$, so that $c$ is indeed zero. We do not
investigate this rigorously.
\qed
\end{remark}

\begin{remark}[Optimal Input Distribution]
Multivariate Gaussian input distribution is a poor choice for channels with 
multiplicative noise. Indeed, it achieves a rate bounded in power in \eqref{eq:Y=HX+N}.  
Log-normal input PDF for the signal norm achieves the asymptotic capacity of the non-constant loss model. 
\end{remark}


\section{Review of the Information Theory of the Optical Fiber}
\label{sec:review}

An information-theoretic analysis of the full model of the optical
fiber does not exist. Even in the special case of the zero-dispersion, spectral
efficiency is unknown. In the full model, we do not know anything about
the capacity in the high power regime, let alone the spectral
efficiency. The state-of-the-art is still lower bounds that
are good in the nearly-linear regime. 
This situation calls for basic research, in order to make progress on
these open problems. 

The present paper builds on earlier work. We acknowledge \cite[Eq. 12]{mecozzi1994llh} for
the equation \eqref{eq:zd},  \cite{mecozzi1994llh,turitsyn2003ico,yousefi2011opc } for the PDF of the
zero-dispersion channel, \cite{yousefi2011opc} for the analysis of the
zero-dispersion model, \cite{yousefi2015cwit2,kramer2015upper} for
noting that Shannon entropy is invariant under the flow of a broad class of
deterministic partial differential equations and for highlighting the
usefulness of the operator splitting (in numerical analysis) in the
analysis of the 
NLS equation. Furthermore, we acknowledge \cite{agrell2015conds} for helpful
insight leading to the rate interpolation Lemma~\ref{lemm:decomposition},
\cite{moser2004dbb,lapidoth2003capacity} for the study of the fading
channels and Section~II of \cite{yousefi2012nft3} for unfolding the origin of the capacity
limitations in fiber --- particularly the finding that signal-signal
interactions are not fundamental limitations in the deterministic model if communication takes place in the right
basis (\ie, the nonlinear Fourier basis), which led us to the study of
the remaining factor in this paper, namely the signal-noise
interactions.

We do not intend to survey the literature in this paper. There
is a good review in \cite[Section~I-A]{ghozlan2015focusing}. The achievable
rates of 1- and multi-solitons is studied, respectively, in
\cite{yousefi2012nft3,meron2012soliton,shevchenko2015,zhang2016isit} and \cite{kaza2012,kaza2016soliton,buelow2016}. 
There is also a myriad of lower bounds that hold good in the low power 
regime; see, \eg,  \cite{mecozzi2012nsl,secondini2013achievable,dar2014new,
terekhov2016physrev,secondini2016limitsv2,turitsyn2016nature}.

The achievable rates of the nonlinear frequency-division multiplexing
for multi-user communication  are presented in \cite{yousefi2016nfdm}
for the Hermitian channel. Fig.~\ref{fig:nfdm} compares the NFDM and WDM rates
\cite[Fig.~6]{yousefi2016nfdm}. The gap
between the WDM and NFDM curve reflects signal-signal interactions. The
gap between the NFDM and AWGN curve reflects signal-noise
interactions.  We conjecture that the NFDM rate is  close to the capacity. At the
power levels shown in Fig.~\ref{fig:nfdm}, 
$\const C_{\textnormal{wdm}}(\const P)=\log\const P+c$ and $\const
C_{\textnormal{nfdm}}=\log\const P+c'$, $c<c'$. Although more
gains are expected at $\const P>-2.4$ dB, the slope of the blue curve
will gradually decrease, converging, in the limit $\const
P\rightarrow\infty$, to the asymptotic form in Theorem~\ref{thm:main}.

It is interesting to compare the extent of the signal-noise interactions
in the time domain \cite{serena2016signalnoise} and in the nonlinear Fourier
domain \cite[Section~IV. A]{tavakkolnia2015sig}.

\begin{figure}
\centering{
\includegraphics[width=0.45\textwidth]{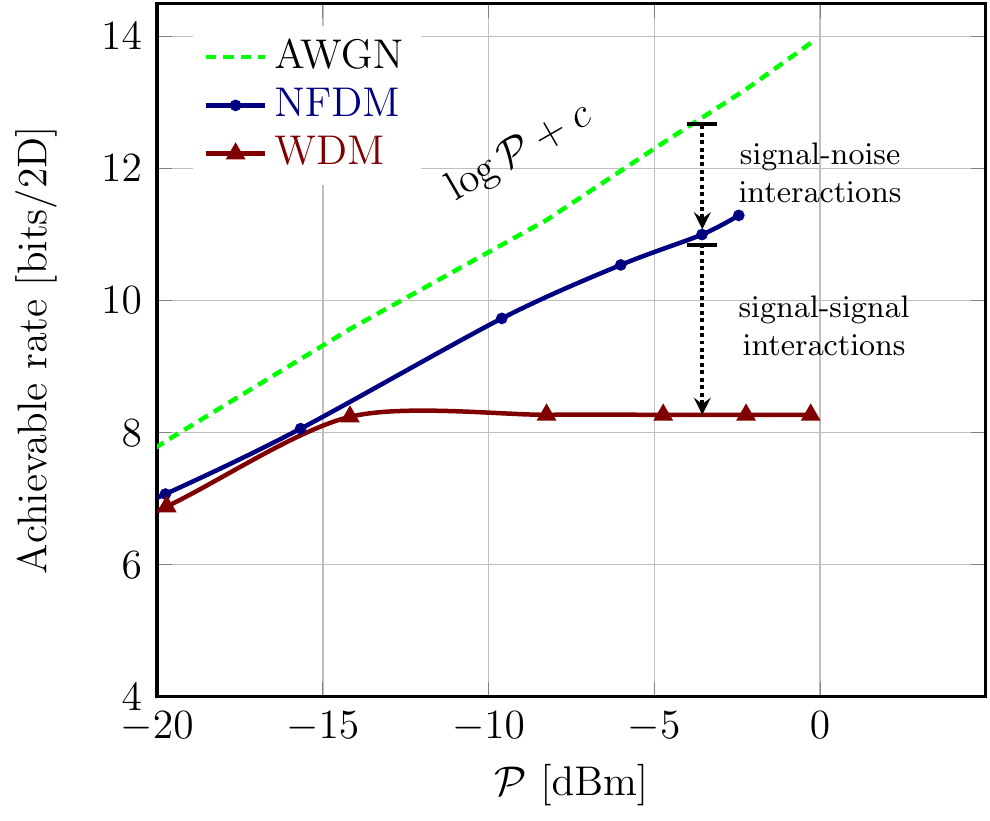}
}
\caption{Achievable rates of the NFDM and WDM, and the capacity of the
  corresponding AWGN channel (from \cite{yousefi2016nfdm}). The NFDM curve is expected to follow
  the asymptotic capacity in the Theorem~\ref{thm:main}.}
\label{fig:nfdm}
\end{figure}


\section{Conclusions}

The asymptotic capacity of the discrete-time periodic model of the optical
fiber is characterized as a function of the input power in Theorem~\ref{thm:main}.
With $n$ signal \dofs\
at the input, $n-1$ \dofs\ are asymptotically lost, leaving signal energy as the only available \dof\ 
for transmission. The appropriate input distribution is a log-normal PDF for the signal norm.
Signal-noise interactions limit the operation of the optical
communication systems to low-to-medium powers.


\section*{Acknowledgments}
The research was partially conducted when the author was at the 
Technische Universit\"at  M\"unchen (TUM). The 
support of the TUM Institute for Advanced Study, funded by the German 
Excellence Initiative, and the support of the Alexander 
von Humboldt Foundation, funded by the German
Federal Ministry of Education and Research, are gratefully
acknowledged.  The author thanks Luca Barletta for comments.

\appendices


\section{Proof of the Identity \eqref{eq:sph-leb}}
\label{app:one}

Let $\der V(\vect x)$ and $\der S(\vect x)$ be the volume and surface element at
point $\vect x\in\Reals^n$ in the spherical coordinate system. Then
\begin{IEEEeqnarray*}{rCl}
\der V(\vect x)&=&|\vect x|^{n-1}\der V(\hat{\vect
  x})\\
&=&|\vect x|^{n-1}\der S(\hat{\vect x})\der|\vect x|.
\end{IEEEeqnarray*}
Thus the Jacobian of the transformation from the Cartesian system with
coordinates $\vect x$ to the spherical system with coordinates $(\abs{\vect x}, \hat{\vect x})$ is
$\abs{\vect x}^{n-1}$. As a consequence
\begin{IEEEeqnarray*}{rCl}
  h(\vect X)&=&h_{\sigma}(\abs{\vect X},\hat{\vect X})+\E\log\abs{\vect
    X}^{n-1}\\
&=&
h(\abs{\vect X})+h_{\sigma}(\hat{\vect X}\bigl|\abs{\vect X})+(n-1)\E\log\abs{\vect
    X}.
\end{IEEEeqnarray*}


\section{Input Output Relation in a Unit}
\label{app:in-out-mssfm}
Define
\begin{IEEEeqnarray*}{rCl}
\matr D_1=\diag(e^{j\Psi_k}),\quad \matr D_2=\diag(e^{j\Phi_k}).
\end{IEEEeqnarray*}
The nonlinear steps in Fig.~\ref{fig:mssfm} in matrix notation are 
\begin{IEEEeqnarray*}{rCl}
  \vect U=\matr D_1\bigl(\vect X+\matr{N}^1\vect e\bigr),\quad
 \vect Y=\matr D_2\bigl(\vect V+\matr{N}^2\vect e\bigr),
\end{IEEEeqnarray*} 
where $\vect e\in\Reals^L$ is the all-one column vector. Combining the
linear and nonlinear steps, we obtain \eqref{eq:one-seg} with $\matr M=\matr D_2\matd R\matr D_1$ and
\begin{IEEEeqnarray}{rCl}
\vect Z =\matr M\matr{N}^1\vect e+\matr D_2 \matr{N}^2\vect e.
\label{eq:additive-Z}
\end{IEEEeqnarray}

Clearly $\matr N^{1,2}\vect e\sim\normalc{0}{\D I_n}$. However $\matr
M\matr{N}^1\vect e$ and $\matr D_2 \matr{N}^2\vect e$ are generally non-Gaussian due to the signal and noise terms
in $\Phi_k$ and $\Psi_l$. But, 1)  in the constant loss
model, if 2) $\forall k$ $\vect
x_k\rightarrow\infty$, then
\begin{IEEEeqnarray}{rCl}
\vect Z\sim\normalc{0}{\matd K},\quad \matd K=\D(1+e^{-\alpha_{r}\epsilon})I_n.
\label{eq:noise}  
\end{IEEEeqnarray}
In summary, $N$ variables are Gaussian; $Z$ variables are Gaussians in the
asymptotic analysis of the constant loss model.

\bibliographystyle{IEEEtran}

\begin{thebibliography}{10}
\providecommand{\url}[1]{#1}
\csname url@samestyle\endcsname
\providecommand{\newblock}{\relax}
\providecommand{\bibinfo}[2]{#2}
\providecommand{\BIBentrySTDinterwordspacing}{\spaceskip=0pt\relax}
\providecommand{\BIBentryALTinterwordstretchfactor}{4}
\providecommand{\BIBentryALTinterwordspacing}{\spaceskip=\fontdimen2\font plus
\BIBentryALTinterwordstretchfactor\fontdimen3\font minus
  \fontdimen4\font\relax}
\providecommand{\BIBforeignlanguage}[2]{{%
\expandafter\ifx\csname l@#1\endcsname\relax
\typeout{** WARNING: IEEEtran.bst: No hyphenation pattern has been}%
\typeout{** loaded for the language `#1'. Using the pattern for}%
\typeout{** the default language instead.}%
\else
\language=\csname l@#1\endcsname
\fi
#2}}
\providecommand{\BIBdecl}{\relax}
\BIBdecl

\bibitem{yousefi2012nft1}
\BIBentryALTinterwordspacing
M.~I. Yousefi and F.~R. Kschischang, ``Information transmission using the
  nonlinear {F}ourier transform, {P}art {I}: {M}athematical tools,'' \emph{IEEE
  Trans.\ Inf.\ Theory}, vol.~60, no.~7, pp. 4312--4328, Jul. 2014, {A}lso
  published at arXiv, Feb. 2012. [Online]. Available:
  \url{http://arxiv.org/abs/1202.3653}
\BIBentrySTDinterwordspacing

\bibitem{yousefi2012nft2}
\BIBentryALTinterwordspacing
------, ``Information transmission using the nonlinear {F}ourier transform,
  {P}art {II}: {N}umerical methods,'' \emph{IEEE Trans.\ Inf.\ Theory},
  vol.~60, no.~7, pp. 4329--4345, Jul. 2014, {A}lso published at arXiv, Apr.
  2012. [Online]. Available: \url{http://arxiv.org/abs/1204.0830}
\BIBentrySTDinterwordspacing

\bibitem{yousefi2012nft3}
\BIBentryALTinterwordspacing
------, ``Information transmission using the nonlinear {F}ourier transform,
  {P}art {III}: {S}pectrum modulation,'' \emph{IEEE Trans.\ Inf.\ Theory},
  vol.~60, no.~7, pp. 4346--4369, Jul. 2014, {A}lso published at arXiv, Feb.
  2013. [Online]. Available: \url{http://arxiv.org/abs/1302.2875}
\BIBentrySTDinterwordspacing

\bibitem{yousefi2011opc}
------, ``On the per-sample capacity of nondispersive optical fibers,''
  \emph{IEEE Trans.\ Inf.\ Theory}, vol.~57, no.~11, pp. 7522--7541, Nov. 2011.

\bibitem{yousefi2016nfdm}
\BIBentryALTinterwordspacing
M.~I. Yousefi and X.~Yangzhang, ``Linear and nonlinear frequency-division
  multiplexing,'' \emph{\textnormal{ar{X}iv:1603.04389}}, pp. 1--14, Mar. 2016.
  [Online]. Available: \url{http://arxiv.org/abs/1603.04389}
\BIBentrySTDinterwordspacing

\bibitem{moser2004dbb}
S.~M. Moser, ``Duality-based bounds on channel capacity,'' Ph.D. dissertation,
  ETH Zurich, Switzerland, Jan. 2005.

\bibitem{mecozzi1994llh}
A.~Mecozzi, ``Limits to long-haul coherent transmission set by the {K}err
  nonlinearity and noise of the in-line amplifiers,'' \emph{IEEE J. Lightw.\
  Technol.}, vol.~12, no.~11, pp. 1993--2000, Nov. 1994.

\bibitem{yousefi2015cwit2}
M.~I. Yousefi, G.~Kramer, and F.~R. Kschischang, ``Upper bound on the capacity
  of the nonlinear {S}chr\"odinger channel,'' in \emph{IEEE 14th Canadian
  Workshop on Inf.\ Theory}, St. John's, Newfoundland, Canada, Jul. 2015, pp.
  1--5.

\bibitem{serena2016signalnoise}
P.~Serena, ``Nonlinear signal--noise interaction in optical links with
  nonlinear equalization,'' \emph{IEEE J. Lightw.\ Technol.}, vol.~34, no.~6,
  pp. 1476--1483, Mar. 2016.

\bibitem{tavakkolnia2015sig}
I.~Tavakkolnia and M.~Safari, ``Signalling over nonlinear fibre-optic channels
  by utilizing both solitonic and radiative spectra,'' in \emph{European Conf.
  Networks and Commun.}, Paris, France, Jul. 2015, pp. 103--107.

\bibitem{agrell2015conds}
E.~Agrell, ``Conditions for a monotonic channel capacity,'' \emph{IEEE Trans.\
  Commun.}, vol.~63, no.~3, pp. 1--11, Sep. 2015.

\bibitem{lapidoth2003capacity}
A.~Lapidoth and S.~Moser, ``Capacity bounds via duality with applications to
  multiple-antenna systems on flat-fading channels,'' \emph{IEEE Trans.\ Inf.\
  Theory}, vol.~49, no.~10, pp. 2426--2467, Oct. 2003.

\bibitem{turitsyn2003ico}
K.~S. Turitsyn, S.~A. Derevyanko, I.~V. Yurkevich, and S.~K. Turitsyn,
  ``Information capacity of optical fiber channels with zero average
  dispersion,'' \emph{Phys.\ Rev.\ Lett.}, vol.~91, no.~20, p. 203901, Nov.
  2003.

\bibitem{kramer2015upper}
G.~Kramer, M.~I. Yousefi, and F.~Kschischang, ``Upper bound on the capacity of
  a cascade of nonlinear and noisy channels,'' in \emph{IEEE Info.\ Theory
  Workshop}, Jerusalem, Israel, Apr. 2015, pp. 1--4.

\bibitem{ghozlan2015focusing}
\BIBentryALTinterwordspacing
H.~Ghozlan and G.~Kramer, ``Models and information rates for multiuser optical
  fiber channels with nonlinearity and dispersion,''
  \emph{\textnormal{ar{X}iv:1503.03124}}, pp. 1--18, Mar. 2015. [Online].
  Available: \url{https://arxiv.org/abs/1503.03124}
\BIBentrySTDinterwordspacing

\bibitem{meron2012soliton}
\BIBentryALTinterwordspacing
E.~Meron, M.~Feder, and M.~Shtaif, ``On the achievable communication rates of
  generalized soliton transmission systems,''
  \emph{\textnormal{ar{X}iv:1207.0297}}, pp. 1--13, Jul. 2012. [Online].
  Available: \url{https://arxiv.org/abs/1207.0297}
\BIBentrySTDinterwordspacing

\bibitem{shevchenko2015}
N.~A. Shevchenko \emph{et~al.}, ``A lower bound on the per soliton capacity of
  the nonlinear optical fibre channel,'' in \emph{IEEE Info.\ Theory Workshop},
  Jeju Island, South Korea, Oct. 2015, pp. 1--5.

\bibitem{zhang2016isit}
Q.~Zhang and T.~H. Chan, ``Achievable rates of soliton communication systems,''
  in \emph{IEEE Int.\ Symp.\ Info.\ Theory}, Barcelona, Spain, Jul. 2016, pp.
  605--609.

\bibitem{kaza2012}
\BIBentryALTinterwordspacing
P.~Kazakopoulos and A.~L.Moustakas, ``Transmission of information via the
  non-linear {S}chr\"odinger equation: {T}he random {G}aussian input case,''
  \emph{\textnormal{arXiv:1210.7940}}, pp. 1--9, Oct. 2012. [Online].
  Available: \url{https://arxiv.org/abs/1210.7940}
\BIBentrySTDinterwordspacing

\bibitem{kaza2016soliton}
P.~Kazakopoulos and A.~L. Moustakas, ``On the soliton spectral efficiency in
  non-linear optical fibers,'' in \emph{IEEE Int.\ Symp.\ Info.\ Theory},
  Barcelona, Spain, Jul. 2016, pp. 610--614.

\bibitem{buelow2016}
H.~Buelow, V.~Aref, and W.~Idler, ``Transmission of waveforms determined by 7
  eigenvalues with {PSK}-modulated spectral amplitudes,'' in \emph{European
  Conf.\ Opt.\ Commun.}, Sep. 2016, pp. 1--3.

\bibitem{mecozzi2012nsl}
A.~Mecozzi and R.-J. Essiambre, ``Nonlinear {S}hannon limit in pseudolinear
  coherent systems,'' \emph{IEEE J. Lightw.\ Technol.}, vol.~30, no.~12, pp.
  2011--2024, Jun. 2012.

\bibitem{secondini2013achievable}
M.~Secondini, E.~Forestieri, and G.~Prati, ``Achievable information rate in
  nonlinear {WDM} fiber-optic systems with arbitrary modulation formats and
  dispersion maps,'' \emph{IEEE J. Lightw.\ Technol.}, vol.~31, no.~23, pp.
  1--14, Dec. 2013.

\bibitem{dar2014new}
R.~Dar, M.~Shtaif, and M.~Feder, ``New bounds on the capacity of the nonlinear
  fiber-optic channel,'' \emph{Opt.\ Lett.}, vol.~39, no.~2, pp. 398--401,
  2014.

\bibitem{terekhov2016physrev}
I.~S. Terekhov, A.~V. Reznichenko, and S.~K. Turitsyn, ``Calculation of mutual
  information for nonlinear communication channel at large {SNR},''
  \emph{Phys.\ Rev.\ E}, vol.~94, no.~4, p. 042203, Oct. 2016.

\bibitem{secondini2016limitsv2}
M.~Secondini and E.~Forestieri, ``The limits of the nonlinear {S}hannon
  limit,'' in \emph{Opt.\ Fiber Commun.\ Conf.\ and Exposition}, Anaheim,
  California, United States, Mar. 2016, pp. 1--3.

\bibitem{turitsyn2016nature}
S.~A. Derevyanko, J.~E. Prilepsky, and S.~K. Turitsyn, ``Capacity estimates for
  optical transmission based on the nonlinear {F}ourier transform,''
  \emph{Nature Commun.}, vol.~7, no. 12710, pp. 1--9, Sep. 2016.

\end{thebibliography}


\end{document}